\newcommand{\hide}[1]{}
\newtheorem{theorem}{Theorem} 
\newtheorem{lemma}{Lemma}{\bfseries}{\itshape}
\newtheorem{observation}[lemma]{Observation} %
\newtheorem{proposition}[lemma]{Proposition}  {\bfseries}{\itshape}
\newcommand{\red}[1]{\textcolor{red}{#1}}
\newcommand{\cnp}{\textbf{NP}}
\newcommand{\prob}{\mbox{MinConCD}}
\newcommand{\proboverlap}{\textsc{MinConCDO}}
\newcommand{\algo}{\textsc{Round}}
\newcommand{\algodp}{\textsc{DP}}
\newcommand{\black}[1]{\textcolor{black}{#1}}
\newcommand{\blue}[1]{\textcolor{blue}{#1}}
\title{Efficient Algorithms for Generating Provably Near-Optimal Cluster Descriptors for Explainability\footnote{A short version of this paper is accepted to the Thirty-Fourth AAAI Conference on Artificial Intelligence (AAAI-20)}}
\author{
\hspace{0.2in} Prathyush Sambaturu\textsuperscript{\rm 1}  
\hspace{0.2in} Aparna Gupta\textsuperscript{\rm 2}  
\hspace{0.2in} Ian Davidson\textsuperscript{\rm 3}  
\hspace{0.2in} S. S. Ravi\textsuperscript{\rm 4, \rm 5} \\ \smallskip 
\hspace{0.2in} Anil Vullikanti\textsuperscript{\rm 1, \rm 4}  
\hspace{0.2in} Andrew Warren\textsuperscript{\rm 4} \\
{\smallskip}
\textsuperscript{\rm 1}\small{Department of Computer Science, University of Virginia, Charlottesville, VA 22904} \\
\textsuperscript{\rm 2}\small{Department of Computer Science, Virginia Tech, Blacksburg, VA 24060}\\
\textsuperscript{\rm 3}\small{Department of Computer Science, UC Davis, Davis, CA 95616}\\
\textsuperscript{\rm 4}\small{Biocomplexity Institute and Initiative,
University of Virginia, Charlottesville, VA 22904}\\
\textsuperscript{\rm 5}\small{Department of Computer Science,
University at Albany--SUNY, Albany, NY 12222} {\smallskip} \\
\small{\{pks6mk, vsakumar, asw3xp\}@virginia.edu,~~
agupta12@vt.edu}, \\ \small{~~ davidson@cs.ucdavis.edu,~~ ssravi0@gmail.com}\\
}
\date{}
\begin{document}
\onecolumn

\maketitle

\begin{abstract}
Improving the explainability of the results from machine learning methods has become an important research goal. Here, we study the problem of making clusters more interpretable by extending a recent approach of [Davidson et al., NeurIPS 2018] for constructing succinct representations for clusters.  Given a set of objects $S$, a partition $\pi$ of $S$ (into clusters), and a universe $T$ of tags such that each element in $S$ is associated with a subset of tags, the goal is to find a representative set of tags for each cluster such that those sets are
pairwise-disjoint and the total
size of all the representatives is minimized. Since this problem is \cnp-hard in general, we develop  approximation algorithms with provable performance guarantees for the problem. We also show applications to explain clusters from  datasets, including clusters of genomic sequences
that represent different threat levels.
\end{abstract}

\section{Introduction}
\label{sec:intro}

As AI and machine learning (ML) methods become pervasive across all domains from health to urban planning, there is an increasing need to make the results of such methods more interpretable. Providing such explanations has now become a legal requirement in some countries \cite{goodman2016regulations}. 
Many researchers are investigating this topic 
under supervised learning,
particularly for methods in deep learning (see e.g.,
\cite{xai-17-proc,xai-18-proc}).
Clustering is a commonly used unsupervised ML technique
(see e.g., 
\cite{1742-5468-2008-10-P10008,Bolla:1604130,Fortunato201075,Tan-book-2006,Han-book-2011,Zaki-book-2014}). 
It is routinely performed on diverse kinds of datasets, sometimes after constructing network abstractions, and optimizing complex objective functions (e.g., \emph{modularity} \cite{1742-5468-2008-10-P10008}). This can often make clusters hard to interpret especially in a post-hoc analysis.
Thus, a natural question is whether it is possible to \emph{explain} a given set of clusters, using additional attributes which, crucially, 
were not used in the clustering procedure. One motivation for our work is to understand  the \emph{threat levels}
of pathogens for which genomic sequences are available.
In \cite{Jain-etal-2018a,Jain-etal-2018b,HI-2018,RW-2018}, researchers have been able to identify some genomic sequences as coming from harmful pathogens (through lab experiments and bioinformatics analysis). Understanding what makes some sequences harmful and distinguishing them from harmless sequences corresponds to the problem of interpreting the clusters.

Davidson et al. (\cite{DGR-NIPS-2018}) present the following formulation of the \emph{Cluster Description Problem} for explaining a given set of clusters. Let $S = \{s_1, \ldots,s_n\}$ be a set of $n$ objects. Let $\pi=\{C_1, \ldots, C_k\}$ be a partition of $S$ into $k \geq 2$ clusters. Let $T$ be the universe of tags such that each object $s_i\in S$ is associated with a subset $t_i\subseteq T$ of tags. 
A \textbf{descriptor} $D_i$ for a cluster $C_i$ ($1 \leq i \leq k$)
is a subset of $T$.
An object $s_j$ in cluster $C_i$ is said to be \emph{covered} by the
descriptor $D_i$ if at least one of the tags associated with $s_j$ is in $D_i$.  The goal is to find $k$ pairwise disjoint descriptors (one descriptor per cluster) so that \emph{all} the objects in $S$ are covered and the total number of tags used in all the descriptors, which will henceforth be referred to as the ``cost" of the solution, is minimized. Davidson et al. \cite{DGR-NIPS-2018} showed that even deciding whether there exists a feasible solution (with no restriction on cost) is \cnp-hard. They use Integer Linear Programming (ILP) methods to solve the problem and other relaxed versions (e.g., not requiring coverage of all objects in $S$, referred to as the ``cover-or-forget'' version, if there is no \emph{exact} feasible solution) on social  media datasets. They point out that this gives interesting and representative descriptions for clusters. However, they leave open the questions of designing efficient and rigorous approximation algorithms, which can scale to much larger datasets, and a deeper exploration of different  notions of approximate descriptions for real world datasets.

\noindent
\textbf{Our contributions.} We find in some datasets that the exact cover formulation of \cite{DGR-NIPS-2018} does not have a feasible solution. The ``cover-or-forget'' variation does give a solution, but might have highly unbalanced coverage, i.e., one cluster gets covered well, but not the others.
We extend the formulation of \cite{DGR-NIPS-2018} to address these issues, develop a suite of scalable algorithms with rigorous guarantees, and evaluate them on several real world datasets. A list of our contributions is given below.

\noindent
(1) \textbf{Formulation.} We introduce the \prob{} problem for cluster description, with \emph{simultaneous} coverage guarantees on all the clusters (defined formally in Section~\ref{prelims}). Informally, given a requirement $M_i\leq |C_i|$ for the number of objects to be covered in each cluster $C_i$, the goal is to find pairwise disjoint descriptors of minimum cost such that at least $M_i$ objects are covered in $C_i$. \prob{} gives more useful cluster descriptions in several datasets, as we discuss below. However, this problem turns out to be very hard---specifically, we show (in \cite{sambaturu2019}) that if the coverage constraints for each cluster must be met, then unless \textbf{P} = \textbf{NP}, for any $\rho \geq 1$, there is no polynomial time algorithm that can approximate the cost to within a factor of $\rho$. Therefore, we consider $(\alpha, \delta)$--approximate solutions, which ensure coverage of at least $\alpha M_i$ objects in cluster $C_i$
 ($1 \leq i \leq k$) using a cost of at most $\delta B^*$, where $B^*$
 is the minimum cost needed to satisfy the coverage requirements. 

\noindent
(2) \textbf{Rigorous algorithms.} We design a randomized algorithm, \algo{}, for \prob{}, which is based on rounding a linear programming (LP) solution. We prove that it 
gives an $(1/8, 2)$--approximation, with high probability.
For the special case of $k = 2$, we present an $(1-2/e, 1)$-approximation algorithm using techniques from submodular function maximization over a matroid (Section \ref{sec:submod}). 

\noindent
(3) \textbf{Experimental results.} We evaluate our algorithms on real and synthetic datasets (Section \ref{sec:results}). We observe that \prob{} gives more appropriate descriptions than those computed using the formulations of \cite{DGR-NIPS-2018}. 
Our results show that the approximation bounds of \algo{} are comparable with the optimum solutions of \cite{DGR-NIPS-2018} (computed using integer linear programming), and significantly better than the worst-case theoretical guarantees.  \algo{} also scales well to  instances which are over two orders of magnitude larger than those considered in \cite{DGR-NIPS-2018}. The different ``knobs'' in the formulation  (coverage  requirement per cluster, cost budget and the allowed overlap) provide a  spectrum of descriptions, which allow a practitioner to better explore and understand the clusters.
Qualitative analysis of the descriptions gives interesting insights  into the clusters. In particular, for the threat sequence dataset, our results give a small set of intuitive attributes which separate the harmful sequences from the harmless ones. 

\noindent
(4) \textbf{Extensions and additional algorithmic results.}
The theoretical guarantees of \algo{} hold only when $M_i=\Theta(|C_i|)$ for each $i$ (i.e., the requirement is to cover a constant fraction of objects in each cluster); the analysis breaks down when $M_i=o(|C_i|)$ for some clusters. 
When the $M_i$'s are arbitrary, we develop a different randomized rounding algorithm that gives an $(O(1), \eta)$--approximation, where $\eta$ is the maximum number of objects covered by any tag. 
Further, we show that \algo{} also works for a bounded overlap version of the problem for $k=2$, where cluster descriptors may overlap. 
Finally, when $|t_i|\leq\gamma$ for each $s_i$, we design a simple dynamic programming algorithm that gives an 
$O(\frac{1}{\gamma}, 1)$-approximation; see \cite{sambaturu2019}.
\smallskip

 \noindent
 \textbf{Our techniques.}
 While the cluster description problem involves \emph{covering} constraints for each cluster (as in the standard maximum coverage problem \cite{Khuller-etal-1999}), it is much harder because the
 disjointness requirement leads to independence constraints on the tags to be chosen. A standard approach for approximating covering problems is randomized rounding (see, e.g.,\cite{Srinivasan:1995:IAP:225058.225138}).
 However, here, the events that two objects $s_i$ and $s_{i'}$ are covered are dependent if $t_i\cap t_{i'}\neq\emptyset$; as a consequence, a standard Chernoff bound type concentration analysis cannot be used.
 We address these issues by observing that the events that $s_i$ and $s_{i'}$ are not covered have a specific type of dependency for which the upper tail bound by Janson and Rucinski \cite{Janson2002TheIU} gives a concentration bound. We develop a new way to analyze our randomized rounding scheme by bounding the number of objects which are not covered in each cluster.
 
 Though coverage is a submodular function (see, e.g., \cite{Calinescu:2011:MMS:2340436.2340447}), it is not clear how to use submodular function maximization techniques to simultaneously maximize the coverage within all clusters.
 We show that the ``saturation'' technique of \cite{Krause2008RobustSO}, which uses the sum of the minimum of each submodular function and a constant, can be adapted for the case $k=2$, but not for larger $k$ because the problem in \cite{Krause2008RobustSO} is for a uniform matroid constraint, whereas here we have partition matroid constraints.

\smallskip
\noindent
\textbf{Related work.}~
The topic of ``Explainable AI'' \cite{Gunning-XAI-2017} 
has attracted a lot of attention especially under supervised learning.
In particular, many researchers have studied the topic
in conjunction with methods in deep learning (e.g.,
\cite{xai-18-proc}).
To our knowledge, not much work has been done in the
context of interpreting results from clustering.
The topic of allowing a  human to interpret a given clustering
and provide suggestions for improvement
 was considered in \cite{Kuo-etal-AAAI-2017}.
Their goal was to improve the clustering quality through
human guidance, and they used constraint programming techniques
to obtain improvements.
Other methods for improving a given clustering were
considered in \cite{dang2010,qi2009}.
The notion of ``descriptive clustering" studied 
in \cite{Bich-etal-IJCAI-2018} is different from
our work; 
their idea is to allow the clustering algorithm to use both the
features of the objects to be clustered and the descriptive
information for each object.
They present methods for constructing the  Pareto frontier
based on two objectives, one based on features and the
other based on the descriptive information.
Like \cite{DGR-NIPS-2018}, the focus of our work is not on
generating a clustering; instead, the goal is to  explain the results
of clustering algorithms.
While the cluster description problem considered here uses a formulation similar to the one used in
\cite{DGR-NIPS-2018}, their focus was on expressing the problem as an ILP and solving it optimally using public domain ILP solvers.
In particular, approximation algorithms with provable performance guarantees were not considered in \cite{DGR-NIPS-2018}.
\section{Preliminaries}
\label{prelims}
\noindent
\textbf{Notation and Definitions.}~
Let $S = \{s_1, \ldots, s_n\}$ be a set of $n$ objects, and $\pi=\{C_1, \ldots, C_k\}$ be a partition of $S$ into $k \geq 2$ clusters. Let $T$ be the universe of $m$ tags such that each object $s_i\in S$ is associated with a subset $t_i\subseteq T$ of tags. A solution is a subset $X\subseteq T$, and will be represented as a partition $X=(X_1,\ldots, X_k)$, where $X_{\ell}$ is the descriptor (i.e., subset of tags) used for cluster  $C_{\ell}$. 
We say that $s_i \in S$ is \textit{covered} by a set $X \subseteq T$ of tags if $X \cap t_i \neq \emptyset$. Let $E(j) = \{s_i : j \in t_i\}$ be the set of all objects that can be covered by the tag $j \in T$. Let $\eta = \max_{j} |E(j)|$ denote the maximum number of objects covered by any tag in $T$.
Let $\gamma= \max_i |t_i|$ denote the maximum number of tags associated with any object in $S$.
Objects $s_i, s_{i^{'}} \in S$ are said to be \textit{dependent} if $t_i \cap t_{i^{'}} \neq \emptyset$, i.e., if their tag sets overlap. 
Let $\Delta(i) = |\{i^{'}: t_i \cap t_{i^{'}} \neq \emptyset\}|$ denote the degree of dependence of $s_i$, and let $\Delta = \max_i \Delta(i)$ be the maximum dependence.
Finally, for a solution $X=(X_1,\ldots,X_k)$, let $V_{\ell}(X)=\{s_i \in C_{\ell}: t_i \cap X_{\ell} \neq \phi\}$ be the subset of objects in $C_{\ell}$ covered by $X$, $1 \leq \ell \leq k$. 
For any integer $k \geq 1$, we use $[k]$ to denote the set $\{1,\ldots,k\}$.

\smallskip

\noindent
\textbf{Problem statement.}~
Our objective is to find a solution $X$ that \emph{simultaneously} ensures high coverage $|V_{\ell}(X)|$ in each cluster $C_{\ell}$, 
$1 \leq \ell \leq k$. An obvious choice is to consider a max-min  type of objective $X=\mbox{argmax}\min_{\ell}|V_{\ell}(X)|$ (see, e.g., \cite{DBLP:journals/corr/abs-1711-06428}). However, this doesn't allow  domain specific coverage requirements (e.g., higher coverage for the cluster of threat sequences in genomic data). Therefore, we consider a more general formulation, which specifies a coverage requirement for each cluster.

\smallskip
\noindent
\textbf{Minimum Constrained Cluster Description (\prob{})}

\noindent
\underline{Instance}: A set $S = \{s_1, \ldots, s_n\}$ of objects, a  partition $\pi=\{C_1, \ldots, C_k\}$ into $k \geq 2$ clusters, a universe $T$ of $m$ tags, tag set $t_i \subseteq T$ for each object $s_i$ and parameter $M_{\ell}$ for each cluster $C_{\ell}$, $1 \leq \ell \leq k$.

\noindent
\underline{Requirement}: Find a solution $X=(X_1,\ldots,X_k)$ that
minimizes the cost $\sum_{\ell=1}^{k}|X_{\ell}|$ and satisfies 
the following constraints:
(i) the subsets in $X$ are pairwise-disjoint and
(ii) for each cluster $C_{\ell}$,
$|V_{\ell}(X)| \geq M_{\ell}$.

\smallskip
\noindent
\textbf{Comparison with the formulation in \cite{DGR-NIPS-2018}.}~
The main problem considered in \cite{DGR-NIPS-2018} is to
minimize the cost  (i.e., $\sum_{\ell}|X_{\ell}|$)
under the constraint that all the objects in $S$ are covered.
This was formulated as an integer linear program (ILP).
That reference also presented an ILP  for minimizing
the cost under the requirement that at least a total of
$\alpha |S|$ objects are covered  over all the clusters 
for a given $\alpha$, $0 < \alpha \leq 1$.
(This was called the ``cover or forget" formulation in \cite{DGR-NIPS-2018}.)
One difficulty with this formulation is that solutions that
satisfy the total coverage
requirement may cover a large percentage of the objects
in some clusters while covering only a small percentage of those
in other clusters.
(We will present an example of this phenomenon using a real life dataset
in Section~\ref{sec:results}.)
Our formulation avoids this difficulty by allowing the specification of the coverage requirement for each cluster separately.
As mentioned in Section~\ref{sec:intro}, our formulation in conjunction with the disjointness requirement introduces additional challenges in developing efficient approximation algorithms with provable performance guarantees.

\smallskip
\noindent
\textbf{Approximation algorithms.} 
Here, we present a simple result regarding the hardness
of approximating \prob.

\begin{proposition}\label{pro:approx_hardness}
If the coverage constraints specified in \prob{} must be met exactly,
then  for any $\rho \geq 1$, it is \cnp-hard to obtain a $\rho$--approximation for the cost of the solution (i.e., the total number of tags used in the solution).
This result holds even when $k$ = 2 and
each object has at most three tags.
\end{proposition}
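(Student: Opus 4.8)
The plan is to decouple the two parts of the claim: the \emph{inapproximability} and the \emph{structural restrictions} ($k=2$ and $|t_i|\le 3$). The central observation is that, when coverage must be met exactly, the whole difficulty already resides in \emph{feasibility}. A $\rho$-approximation algorithm $A$ for the cost is, by definition, required to return a feasible solution (of cost at most $\rho B^*$) whenever one exists; on an infeasible instance it can return no feasible solution. Since we can verify in polynomial time whether a candidate output $(X_1,\dots,X_k)$ is actually feasible (check pairwise-disjointness and that $|V_\ell(X)|\ge M_\ell$ for each $\ell$), running $A$ and testing its output would decide feasibility in polynomial time. Crucially, this uses \emph{no} gap in the cost: the dichotomy between ``feasible with finite optimum'' and ``infeasible'' is all that is needed, which is exactly why the conclusion holds for \emph{every} $\rho\ge 1$ at once, with no amplification. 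So it suffices to show that deciding feasibility of \prob{} is \cnp-hard even for $k=2$ and $|t_i|\le 3$.

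For the feasibility reduction I would start from $2$-colorability of $3$-uniform hypergraphs (equivalently Monotone NAE-3SAT, or Set Splitting restricted to $3$-element sets), which is \cnp-complete. Given a $3$-uniform hypergraph $H=(V,\mathcal{E})$, build a \prob{} instance with $k=2$ and tag universe $T=V$. For each hyperedge $e\in\mathcal{E}$ introduce two objects, $a_e\in C_1$ and $b_e\in C_2$, each with tag set $t_{a_e}=t_{b_e}=e$; thus every object has exactly three tags. Set $M_1=M_2=|\mathcal{E}|$, so feasibility demands covering \emph{all} objects. A disjoint solution $(X_1,X_2)$ is precisely an assignment of each vertex to ``red'' ($X_1$), ``blue'' ($X_2$), or ``unused'': object $a_e$ is covered iff $e$ contains a red vertex, and $b_e$ is covered iff $e$ contains a blue vertex. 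Hence a feasible solution corresponds to a partial $2$-coloring in which every hyperedge has at least one red and at least one blue vertex.

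The key step is to verify this equivalence in both directions. If $H$ is properly $2$-colorable, coloring all vertices makes every edge non-monochromatic, giving a feasible $(X_1,X_2)$. Conversely, from any feasible $(X_1,X_2)$ I would recover a proper $2$-coloring by assigning each ``unused'' vertex an arbitrary color (say red): every edge already contained a genuinely red and a genuinely blue vertex, and recoloring only unused vertices cannot destroy either property, so the total coloring remains non-monochromatic on every edge. This establishes feasibility $\iff$ $2$-colorability, hence \cnp-hardness of feasibility for $k=2$ and three tags per object; combined with the first paragraph, no polynomial-time $\rho$-approximation for the cost exists for any $\rho\ge 1$ unless $\mathbf{P}=\cnp$.

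The main obstacle is designing the feasibility gadget with these tight parameters, not the inapproximability step. One must choose a source problem whose native structure already forces a binary choice (matching $k=2$ as a red/blue assignment) and sets of size three (matching $|t_i|\le3$), and one must correctly handle the ``unused-tag'' slack, since a feasible descriptor need not partition all of $T$. Once this parameter-faithful feasibility reduction is in place, the ``for any $\rho$'' strength of the statement follows immediately from the feasible/infeasible dichotomy, with no cost-gap construction required.
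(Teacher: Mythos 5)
Your proposal is correct, and its overall skeleton matches the paper's: both arguments observe that a $\rho$-approximation algorithm must return a (checkable) feasible solution whenever one exists, so inapproximability for every $\rho\ge 1$ reduces to \cnp-hardness of the feasibility question itself, with no cost gap needed. Where you diverge is in how feasibility hardness is established. The paper simply invokes the 3SAT-based reduction of Davidson et al.\ (their Theorem~3.1 for the DTDF problem), noting that the instances it produces already have $k=2$ and at most three tags per object, and then reinterprets those instances as \prob{} instances with $M_\ell=|C_\ell|$. You instead build a self-contained reduction from $2$-colorability of $3$-uniform hypergraphs: tags are vertices, each hyperedge spawns one object per cluster carrying that edge as its tag set, and full coverage with disjoint descriptors is exactly a red/blue assignment making every edge bichromatic. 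Your equivalence argument is sound in both directions --- in particular you correctly handle the ``unused tag'' slack by noting that recoloring uncolored vertices cannot destroy an edge's existing red and blue witnesses --- and the source problem is genuinely \cnp-complete. What your route buys is a transparent, parameter-faithful gadget that makes the roles of $k=2$ and $|t_i|\le 3$ visible (the binary choice and the edge size), at the cost of introducing and justifying a new reduction; what the paper's route buys is brevity, by outsourcing the combinatorial work to a prior result whose instances happen to satisfy the same restrictions.
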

\begin{proof}
Using a reduction from 3SAT, Theorem~3.1  in \cite{DGR-NIPS-2018} 
shows that it is \cnp-hard to determine whether there are  disjoint descriptors for two clusters such that all the objects in the
two clusters are covered.
They denote this problem by DTDF.
The instances of DTDF produced by the reduction from 3SAT 
satisfy two properties: (i) there is no restriction on
 the number of tags that can be used by the descriptors
 and (ii) each object in the DTDF instance has at most 3 tags.
 We use the same reduction and interpret the resulting DTDF instance
 as an instance of \prob{} as follows: set the coverage requirement 
 for each cluster $C_{\ell}$ to $|C_{\ell}|$, for $\ell = 1,\,2$,
 It is easy to see that there is a solution to the 3SAT instance iff an approximation  algorithm with any performance guarantee $\rho \geq 1$ produces a solution for the \prob{} instance.
In other words, the existence of a polynomial time $\rho$--approximation algorithm for \prob{} would imply a polynomial time algorithm for 3SAT.
Since the latter problem is \cnp-hard, the proposition follows.
\end{proof}

Therefore, if the coverage requirements
must be met, then the cost cannot be approximated to within any factor
$\rho \geq 1$, unless \textbf{P} = \cnp. 
Therefore,
we study bi-criteria approximation algorithms. We say that a solution $X$ is an $(\alpha, \delta)$-approximation if 
(i) for each cluster $C_{\ell}$, 
$|V_{\ell}(X)| \geq \alpha M_{\ell}$ and 
(ii) $\sum_{\ell} |X_{\ell}| \leq \delta B^*$, 
where $B^*$ is the optimal cost.

\smallskip
\noindent
\textbf{Other variations.}
Another variation we will explore, referred to as \proboverlap, allows limited overlap between different descriptors. Given input parameters $M_{\ell}$ for $\ell\in[k]$ and overlap limit $B_o$, the objective here is to find a solution $X=(X_1,\ldots,X_k)$ of minimum cost such that 
$|V_{\ell}(X)|\geq M_{\ell}$ for each $\ell$ and 
$\sum_{\ell\neq\ell'} |X_{\ell}\cap X_{\ell'}|\leq B_o$.

\section{Algorithms}
\subsection{Algorithm \algo{}: approximation using Linear Programming and Rounding}
\label{methods1}

Our approach for approximating \prob{} is based on LP relaxation and then rounding the fractional solution. This is a common approach for many combinatorial optimization problems, especially those with covering constraints (see, e.g., \cite{DBLP:books/daglib/0030297}). However, the disjointness requirement for descriptors poses a challenge in terms of dependencies and requires a new method of rounding. We start with an ILP formulation. 
\smallskip

\noindent
\textbf{ILP Formulation.} 
For each $j\in T$ and $\ell\in[k]$, $x_{\ell}(j)$ is an indicator
(i.e., \{0,1\}-valued) variable, which is $1$ if tag $j\in X_{\ell}$.
We have an indicator variable $z(i)$ for each $s_i \in S$, which is $1$ if object $s_i$ is covered. 
The objective and constraints of the ILP formulation are as follows.
\begin{center}
$(\mathcal{IP})$ \hspace*{0.2in} Minimize   
$\displaystyle{\sum_{\ell=1}^{k}\sum_{j \in T} x_{\ell}(j)}$~~
such that \\ \smallskip
\begin{tabular}{l}
$\displaystyle{\forall \ell,\ \forall s_i\in C_{\ell}: \sum_{j\in t_i} x_{\ell}(j) ~\geq~ z(i)}$\\
$\displaystyle{\forall \ell: \sum_{s_i\in C_{\ell}} z(i) ~\geq~ M_{\ell}}$\\ [2ex]
$\displaystyle{\forall j: \sum_{\ell} x_{\ell}(j) ~\leq~  1}$,~~ All variables $\in\{0, 1\}$ \\
\end{tabular}
\end{center}


\begin{algorithm2e}[tb]
\DontPrintSemicolon
\SetKwInOut{Input}{Input}\SetKwInOut{Output}{Output}
\Input{
$S$, $\pi=\{C_1,\ldots, C_k\}$, $T$, $M_{\ell}$ for each $\ell=1,\ldots,k$.
(Note:~ $|S| = n$.)
}
\Output{
$X=(X_1,\ldots,X_k)$
}
Let $\mathcal{P}$ be a linear relaxation of the ILP $\mathcal{IP}$, obtained by requiring all variables to be in $[0, 1]$ (instead of being binary).\;
Solve $\mathcal{P}$. If  it is not feasible,
return ``no feasible solution''. Else, let $x^*, z^*$ denote the optimal fractional solution and $B$ denote the associated cost.\;
For all $j$, and for all $\ell$, set $x_{\ell}(j) = x^*_{\ell}(j)/2$, and for all $s_i$ set $z(i) = z^*(i)/2$.\;
\For{$4\ln{n}$ ~\mbox{times}}{
\For{$j \in T$ ~\mbox{and}~ $\ell=1,\ldots,k$}{
With probability $x_{\ell}(j)$, round $X_{\ell}(j)=1$ and $X_{\ell'}(j)=0$ for all $\ell'\neq\ell$.\;
With probability $1-\sum_{\ell} x_{\ell}(j)$, set $X_{\ell'}(j)=0$ for all $\ell'$.\;
}
\For{$s_i\in S$}{
If $X_{\ell}(j)=1$ for some $j\in t_i$, define $Z(i)=1$.\;
}
For each $\ell$, define 
$Z_{\ell} = \sum_{s_i\in C_{\ell}} Z(i)$.\;
If $Z_{\ell}\geq M_{\ell}/8$ for each $\ell$, and $\sum_{\ell}\sum_j X_{\ell}(j)\leq 2B$, return $X$ as the solution and \textbf{stop}.\;
}
Return failure.\;
\caption{Algorithm \algo{}}
\label{alg:round}
\end{algorithm2e}

Algorithm \ref{alg:round} describes the steps of \algo{}. The linear program $\mathcal{P}$ (from Step~1) can be solved (Step 2) using standard techniques in polynomial time (e.g., \cite{KT-2006}) and
a fractional solution to the variables of $\mathcal{P}$ can be obtained efficiently whenever there is a feasible solution. We analyze the performance of \algo{} in
Theorem \ref{theorem:round1}.
Most of our discussion will focus on analyzing the solution $X=(X_1,\ldots,X_k)$ computed in any iteration of Steps 4--14 of Algorithm~\ref{alg:round}.
For each $\ell$, define $Z_{\ell} = \sum_{s_i\in C_{\ell}} Z(i)$.
A proof of the following lemma appears in \cite{sambaturu2019}.

\begin{lemma}
\label{lemma:Z_kprob}
For each $\ell\in[k]$, the expected number of objects covered in cluster $C_{\ell}$ by a solution $X$ in any round of Step 4 of algorithm \algo{} is at least $M_{\ell}/{4}$.
\end{lemma}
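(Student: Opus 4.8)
The plan is to establish the bound by linearity of expectation, writing $\mathbb{E}[Z_{\ell}] = \sum_{s_i \in C_{\ell}} \Pr[Z(i) = 1]$ and lower-bounding the coverage probability of each individual object in terms of the scaled fractional solution. Fix a cluster $C_{\ell}$ and an object $s_i \in C_{\ell}$. In a single round of Steps 4--14 each tag $j$ is rounded independently, and $s_i$ fails to be covered precisely when no tag $j \in t_i$ is assigned to cluster $\ell$. Since the tags are rounded independently and tag $j$ is assigned to $\ell$ with probability $x_{\ell}(j) = x^*_{\ell}(j)/2$ (the halved value set in Step 3), I would write
\[
\Pr[Z(i) = 0] = \prod_{j \in t_i}\left(1 - x^*_{\ell}(j)/2\right).
\]

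Next I would apply the elementary inequality $1 - p \le e^{-p}$ termwise to obtain $\Pr[Z(i) = 0] \le e^{-\sum_{j \in t_i} x^*_{\ell}(j)/2}$, hence $\Pr[Z(i) = 1] \ge 1 - e^{-\sum_{j \in t_i} x^*_{\ell}(j)/2}$. The first constraint of $(\mathcal{IP})$ guarantees $\sum_{j \in t_i} x^*_{\ell}(j) \ge z^*(i)$, so by monotonicity of $1 - e^{-t}$ this quantity is at least $1 - e^{-z^*(i)/2}$. The final step is the elementary bound $1 - e^{-x} \ge x/2$, valid on $[0,1]$ and in particular at $x = z^*(i)/2 \le 1/2$ (using $z^*(i) \le 1$ in the relaxation); this yields $\Pr[Z(i) = 1] \ge z^*(i)/4$. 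Summing over $s_i \in C_{\ell}$ and invoking the coverage constraint $\sum_{s_i \in C_{\ell}} z^*(i) \ge M_{\ell}$ then gives $\mathbb{E}[Z_{\ell}] = \sum_{s_i \in C_{\ell}} \Pr[Z(i) = 1] \ge \tfrac{1}{4}\sum_{s_i \in C_{\ell}} z^*(i) \ge M_{\ell}/4$, as required.

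The calculation is routine, so the only points requiring care are (i) verifying that after the halving in Step 3 the quantities $x_{\ell}(j)$ and $1 - \sum_{\ell} x_{\ell}(j)$ are legitimate probabilities, which follows because $\sum_{\ell} x^*_{\ell}(j) \le 1$ forces $\sum_{\ell} x_{\ell}(j) \le 1/2$; and (ii) choosing the elementary bound on $1 - e^{-x}$ that lands exactly on the target constant $1/4$. A looser bound would still recover a positive fraction of $M_{\ell}$, but the slope-$\tfrac{1}{2}$ chord on $[0,1]$ is what matches the $M_{\ell}/4$ in the statement.

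I would emphasize that no concentration or dependency argument enters here, since only the expectation is claimed and linearity of expectation requires no independence across objects. The correlations between the coverage events of dependent objects $s_i, s_{i'}$ (those with $t_i \cap t_{i'} \neq \emptyset$), which are handled later via the upper tail bound of Janson and Rucinski, become relevant only when passing from $\mathbb{E}[Z_{\ell}]$ to a high-probability guarantee in Theorem~\ref{theorem:round1}; for the present lemma the independence of the per-tag rounding decisions is all that is used, and that is exactly what makes the product form of $\Pr[Z(i)=0]$ valid.
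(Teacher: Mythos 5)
Your proof is correct and follows essentially the same route as the paper's: write $\Pr[Z(i)=0]$ as a product over the independently rounded tags, apply $1-p\le e^{-p}$ together with the (scaled) LP covering constraint, then the chord bound on $1-e^{-x}$, and finish by linearity of expectation and $\sum_{s_i\in C_{\ell}} z^*(i)\ge M_{\ell}$. If anything, your constants are tracked more carefully: the paper's displayed chain asserts $1-e^{-z(i)}\ge z^*(i)/2$ where it should read $z^*(i)/4$ (equivalently $z(i)/2$), but this is a benign typo and both arguments land on the same final bound $M_{\ell}/4$.
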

\begin{proof}
The probability that $s_i \in C_{\ell}$ is covered after rounding is
\begin{eqnarray*}
    Pr(Z(i)=1) &=& 1 - \prod_{j \in t_i} (1-x_{\ell}(j)) \\
                 &\geq& 1 - e^{-\sum_{j \in t_i} x_{\ell}(j)} \\
                 &\geq&  1 - e^{-z(i)} \geq   \frac{z^*(i)}{2}  
\end{eqnarray*}
The first inequality above follows from the fact that $1-x\leq e^{-x}$ for $x\in[0, 1]$, whereas the second inequality follows from the fact that $e^{-x}\leq 1-x/2$ for $x\in[0, 1/2]$. The last inequality follows from the scaling step.
Therefore, the expected number of objects covered in $C_{\ell}$ is
\begin{equation*}
    E[Z_{\ell}] = \sum_{s_i \in C_{\ell}} E[Z(i)] \geq  \sum_{s_i \in C_{\ell}} \frac{z^*(i)}{2} \geq \frac{M_{\ell}}{4},
\end{equation*}
since the LP ensures $\sum_{s_i\in C_{\ell}} z^*(i)\geq M_{\ell}$.
\end{proof}

\noindent
\textbf{Challenge in deriving a lower bound on the number of objects covered in each cluster.}
Lemma \ref{lemma:Z_kprob} implies that, in expectation, a constant fraction of the objects in each cluster are covered. If the variables $Z(i)$ were all independent, we could use a Chernoff bound (see, e.g., \cite{DBLP:books/daglib/0025902}) to show that $Z_{\ell}$ is concentrated around $E[Z_{\ell}]=\Theta(M_{\ell})$. However, $Z(i)$ and $Z(i')$ are dependent when $t_i\cap t_{i'}\neq\emptyset$. Therefore, the standard Chernoff bound  cannot be used in this case, and a new approach is needed. We address this issue by considering the number of objects which are \emph{not covered} in $C_{\ell}$. There are dependencies in this case as well; however, they are of a special type, which can be handled by Janson's upper tail bound \cite{Janson2002TheIU}, which is one of the few known concentration bounds for dependent events.

For $s_i \in C_{\ell}$, let $Y(i) = 1-Z(i)$ be an indicator for $s_i$ not covered by the solution $X$. 
We apply the upper tail bound of \cite{Janson2002TheIU} to obtain a concentration bound on $Y_{\ell} = \sum_{s_i\in C_{\ell}} Y(i)$. We first observe that the dependencies among the $Y_{\ell}$ variables are of the form considered in \cite{Janson2002TheIU}.
Let $\Gamma=T$, and let $\xi_{j\ell}$ be an indicator that is $1$ if $X_{\ell}(j)=0$. For a fixed $\ell$, the variables $X_{\ell}(j)$ are independent over all $j$, since \algo{} rounds the variables for each $j$ independently. Hence, for a fixed $\ell$, the random variables $\xi_{j\ell}$ are all independent.
Then, for $s_i\in C_{\ell}$, $Y(i)=\prod_{j\in t_i} \xi_{j\ell}$.
This implies that $Y(i)$ and $Y(i')$ are independent if $t_i\cap t_i'=\emptyset$.
Therefore, the random variables $Y(i)$ are of the type considered in \cite{Janson2002TheIU}.
Let $\Delta$ be the maximum number of sets $t_{i'}$ which intersect with any $t_i$,
as defined earlier in Section \ref{prelims}, and let $\lambda=E[Y_{\ell}]$.
Then, the bound from \cite{Janson2002TheIU} gives
\[
\Pr[Y_{\ell} \geq \lambda + t] \leq (\Delta + 1)\,\mbox{exp}\Big(-\frac{t^2}{4(\Delta+1)(\lambda+t/3)}\Big).
\]
We use this bound in the lemma below.

\begin{lemma}
\label{Y1bound}
Assume $M_{\ell} \geq a|C_{\ell}|$ for all $\ell\in[k]$ for a constant $a\in(0,1]$, and let $(\Delta+1) \leq \min_{\ell\in[k]}\frac{d |C_{\ell}|}{\log n}$, where $d \leq \frac{a^2}{576}$. 
Let $t = M_{\ell}/{8}$. Then, for any fixed $\ell\in[k]$ and  any round of Step 4 of \algo,
$\Pr(Y_{\ell} \geq E[Y_{\ell}] +t) \leq \frac{1}{n}$.
\end{lemma}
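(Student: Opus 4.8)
The plan is to apply Janson's upper tail bound—whose applicability to the indicators $Y(i)=\prod_{j\in t_i}\xi_{j\ell}$ has already been established in the discussion preceding the statement—directly with the prescribed choice $t=M_\ell/8$, and then to verify that the resulting right-hand side is at most $1/n$. No new probabilistic reasoning is needed: the entire argument reduces to bounding the three quantities appearing in the bound, namely $\lambda=E[Y_\ell]$, $t$, and $\Delta+1$, in terms of $|C_\ell|$ and $\log n$, and checking that the constant $d\le a^2/576$ is small enough to push the exponent past $2\ln n$. This will then be carried out for a single fixed round of Step 4, which suffices since every round has the same structure.

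First I would estimate the three ingredients. Since each $Y(i)$ is $\{0,1\}$-valued and only $|C_\ell|$ objects contribute, we have $\lambda=E[Y_\ell]\le |C_\ell|$ (equivalently $\lambda=|C_\ell|-E[Z_\ell]\le |C_\ell|-M_\ell/4$ via Lemma~\ref{lemma:Z_kprob}, but the trivial bound is enough). From $M_\ell\ge a|C_\ell|$ and $t=M_\ell/8$ we get $t\ge a|C_\ell|/8$, so $t^2\ge a^2|C_\ell|^2/64$. For the denominator the crucial point is to be frugal on $\lambda+t/3$: combining $\lambda\le|C_\ell|$ with $t/3=M_\ell/24\le|C_\ell|/24$ gives $\lambda+t/3\le\tfrac{25}{24}|C_\ell|$. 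Substituting $\Delta+1\le d|C_\ell|/\log n$, the factor $|C_\ell|^2$ cancels between numerator and denominator, leaving
\[
\frac{t^2}{4(\Delta+1)(\lambda+t/3)}\;\ge\;\frac{a^2|C_\ell|^2/64}{4\cdot\tfrac{d|C_\ell|}{\log n}\cdot\tfrac{25}{24}|C_\ell|}\;=\;\frac{3a^2}{800\,d}\,\log n.
\]
Then the hypothesis $d\le a^2/576$ forces this exponent to be at least $\tfrac{3\cdot 576}{800}\log n=2.16\log n>2\log n$.

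Finally I would absorb the prefactor: since $d<1$, $|C_\ell|\le n$, and $\log n\ge 1$, the hypothesis gives $\Delta+1\le d|C_\ell|/\log n\le n$, so Janson's bound yields $\Pr(Y_\ell\ge\lambda+t)\le n\cdot\exp(-2\ln n)=1/n$, reading $\log$ as the natural logarithm so that the exponent and the target probability are on the same footing. This completes the argument for the fixed round, and hence the lemma.

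I expect the main obstacle to be purely the constant-chasing rather than any conceptual difficulty, precisely because the dependency structure has already been reconciled with Janson's framework before the statement. The one place where genuine care is required is the estimate of $\lambda+t/3$: the lazy bound $\lambda+t/3\le 2|C_\ell|$ yields only an exponent of about $1.125\log n$, short of the required $2\log n$, so one must retain the sharper $\tfrac{25}{24}|C_\ell|$ to make the constant $576$ do its job. A secondary point to confirm is that the logarithm base in the hypothesis on $\Delta+1$ matches the natural logarithm used to reach $1/n$; otherwise a base-change factor must be folded into $d$.
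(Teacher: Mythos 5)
Your proposal is correct and follows essentially the same route as the paper: apply Janson's upper tail bound with $t=M_{\ell}/8$, bound $(\Delta+1)(\lambda+t/3)$ using the hypotheses $M_{\ell}\geq a|C_{\ell}|$ and $(\Delta+1)\leq d|C_{\ell}|/\log n$ so that the exponent exceeds $2\log n$, and absorb the prefactor $(\Delta+1)\leq n$. The only cosmetic difference is that the paper carries the sharper estimate $\lambda\leq(1-a/4)|C_{\ell}|$ from Lemma~\ref{lemma:Z_kprob} through its constant-chasing, whereas you correctly observe that the trivial bound $\lambda\leq|C_{\ell}|$ (combined with your $\tfrac{25}{24}|C_{\ell}|$ estimate) already clears the $2\log n$ threshold under $d\leq a^2/576$.
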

\begin{proof}
We consider a fixed $\ell$ here.
We have $\Pr[Y(i)=1] = 1-\Pr[Z(i)=1] \leq 1-z^*(i)/2$, from the proof of Lemma \ref{lemma:Z_kprob}.
Next, $\sum_{s_i \in C_{\ell}} Y(i) + Z(i) = Y_{\ell} + Z_{\ell} = |C_{\ell}|$. Therefore, from Lemma \ref{lemma:Z_kprob}
\begin{equation*}
     E[Y_{\ell}] = |C_{\ell}| - E[Z_{\ell}]  \leq |C_{\ell}| - \frac{M_{\ell}}{4} \leq (1-\frac{a}{4}) |C_{\ell}|,
\end{equation*}
since $M_{\ell} \geq a|C_{\ell}|$.
Let $\lambda = E[Y_{\ell}]$. Then
\[
	(\Delta+1) \lambda ~\leq~ \frac{d |C_{\ell}|^2 (1-\frac{a}{4})}{\log n} ~\leq~ \frac{d |C_{\ell}|^2 (1-\frac{a}{4}) a^2}{a^2\log n}
\]
\[
	~\leq~ \frac{d(1-\frac{a}{4}) 64 t^2}{a^2 \log n}
\]
Also
\begin{equation*}
(\Delta+1)\frac{t}{3} \leq  \frac{d |C_1| t a}{3 a \log n}  \leq \frac{d M_1 t}{3a \log n} \\ \nonumber
		    = \frac{8d t^2}{3a\log n} \\ \nonumber
\end{equation*}
Therefore,
\begin{equation}
	4(\Delta+1) (\lambda+\frac{t}{3})  \leq \frac{8 d t^2}{a \log n}\Big[\frac{8(1-a/4)}{a} + \frac{1}{3}\Big] \\ \nonumber
\end{equation}
Putting these together, we have
\begin{equation*}
	\frac{t^2}{4(\Delta+1) (\lambda+\frac{t}{3})}  \geq \frac{\log n}{\frac{8d}{a} [\frac{8(1-a/4)}{a} + \frac{1}{3}]} \geq 2 \log n,
\end{equation*}
where the last inequality follows because $d \leq \frac{a^2}{576}$; thus,  $\frac{8d}{a} \Big[\frac{8(1-a/4)}{a} + \frac{1}{3}\Big]\leq \frac{1}{2}$.
Applying Janson's upper tail bound,
\begin{eqnarray*}
	 \Pr(Y_{\ell} \geq \lambda +t) & \leq& (\Delta+1) \mbox{exp}\Big(\frac{-t^2}{4(\Delta+1)(\lambda+t/3)}\Big) \\ \nonumber
	 						 & \leq& (\Delta+1)\, \mbox{exp}(-2 \log n) 
	 						 \leq \frac{(\Delta+1)}{n^2}\\ \nonumber
	 						 &\leq& \frac{M_{\ell}}{n^2} 
	 						  ~\leq~ \frac{1}{n},
\end{eqnarray*}
where the last inequality is because $M_{\ell}$ $\leq$ $|C_{\ell}|\leq n$.
\end{proof}

\begin{theorem}
\label{theorem:round1}
Suppose an instance of \prob{} satisfies the following conditions: (1) $M_{\ell} \geq a |C_{\ell}|$ for all $\ell\in[k]$, and for some constant $a\in(0, 1]$, and (2) $(\Delta+1) \leq \min_{\ell}\frac{d |C_{\ell}|}{\log n}$ and $d \leq \frac{a^2}{576}$, and (3) $k\leq n/4$.
If the LP relaxation $(\mathcal{P})$ is feasible, then
with probability at least $1-\frac{1}{n}$, algorithm \algo{} successfully returns a solution $X$, which is an $(1/8, 2)$--approximation.
\end{theorem}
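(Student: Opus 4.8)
The plan is to show that the two acceptance tests in the final step of \algo{} are each satisfied in a single round with probability bounded away from failure, and then to amplify over the $4\ln n$ independent rounds. Observe first that the rounding makes the disjointness constraint automatic: each tag $j$ is assigned to at most one cluster, so the descriptors $X_1,\ldots,X_k$ are pairwise disjoint by construction, and $\sum_{\ell} x_{\ell}(j) = \tfrac12\sum_{\ell} x^*_{\ell}(j) \le \tfrac12$ guarantees the rounding probabilities in Steps~6--7 are well-defined. Hence it suffices to control coverage and cost.

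For the coverage test I would fix a cluster $C_{\ell}$ and translate the bad event $Z_{\ell} < M_{\ell}/8$ into an upper-tail event for $Y_{\ell} = |C_{\ell}| - Z_{\ell}$. Since Lemma~\ref{lemma:Z_kprob} gives $E[Z_{\ell}] \ge M_{\ell}/4$, we have $E[Y_{\ell}] \le |C_{\ell}| - M_{\ell}/4$, whence
\[
\{Z_{\ell} < M_{\ell}/8\} = \{Y_{\ell} > |C_{\ell}| - M_{\ell}/8\} \subseteq \{Y_{\ell} > E[Y_{\ell}] + M_{\ell}/8\}.
\]
This is precisely the event controlled by Lemma~\ref{Y1bound} with $t = M_{\ell}/8$, so $\Pr(Z_{\ell} < M_{\ell}/8) \le 1/n$ for each $\ell$. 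A union bound over the $k$ clusters, together with hypothesis~(3) that $k \le n/4$, shows the coverage test fails in a given round with probability at most $k/n \le 1/4$.

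For the cost test I would bound the expected rounded cost. By linearity and the scaling step, $E\big[\sum_{\ell}\sum_j X_{\ell}(j)\big] = \sum_{\ell}\sum_j x_{\ell}(j) = B/2$, where $B$ is the LP value. Markov's inequality then gives $\Pr\big(\sum_{\ell}\sum_j X_{\ell}(j) > 2B\big) \le (B/2)/(2B) = 1/4$. Combining the two tests by a union bound, a single round fails with probability at most $1/4 + 1/4 = 1/2$, i.e.\ succeeds with probability at least $1/2$.

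Finally I would amplify: since the $4\ln n$ rounds are independent, the probability that every round fails is at most $(1/2)^{4\ln n} = n^{-4\ln 2} \le 1/n$, using $4\ln 2 > 1$. Hence with probability at least $1 - 1/n$ some round passes both tests and is returned; on that event the returned $X$ satisfies $|V_{\ell}(X)| = Z_{\ell} \ge M_{\ell}/8$ for all $\ell$ (giving $\alpha = 1/8$) and, because the LP is a relaxation so $B \le B^*$, has cost $\sum_{\ell}|X_{\ell}| \le 2B \le 2B^*$ (giving $\delta = 2$), which with the built-in disjointness is the claimed $(1/8,2)$--approximation. The genuinely delicate step is the coverage concentration under the dependencies among the $Z(i)$; but that difficulty is already packaged in Lemma~\ref{Y1bound} through Janson's upper-tail bound, so the theorem-level obstacle reduces to bookkeeping—aligning the acceptance threshold $M_{\ell}/8$ with the deviation $t = M_{\ell}/8$ and checking that the round count $4\ln n$ drives the all-rounds-fail probability below $1/n$.
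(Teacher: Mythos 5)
Your proof is correct and follows essentially the same route as the paper's: Lemma~\ref{Y1bound} handles the coverage concentration via the complement variables $Y_{\ell}$, Markov's inequality handles the cost, and a union bound plus amplification over the $4\ln n$ rounds finishes the argument. The only differences are cosmetic improvements --- you use the exact expected cost $B/2$ (giving a per-round failure bound of $1/4$ for the cost test rather than the paper's $1/2$, and hence a per-round success probability of $1/2$ rather than $1/4$) and you explicitly note that $B\le B^*$ because the LP is a relaxation, a step the paper leaves implicit.
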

\begin{proof}
We analyze the properties of a solution $X$ computed in each round 
of Step 4 of \algo{}.
From Lemma \ref{Y1bound}, for any $\ell\in[k]$, $Y_{\ell} \leq E[Y_{\ell}] + \frac{M_{\ell}}{8}$ with probability at least $1-\frac{1}{n}$. Substituting $E[Y_{\ell}] \leq |C_{\ell}| - \frac{M_{\ell}}{4}$ (shown in the proof of Lemma \ref{Y1bound}) in the above equation we have,
\begin{equation}
	Y_{\ell} \leq |C_{\ell}| - \frac{M_{\ell}}{4} + \frac{M_{\ell}}{8} \leq |C_{\ell}|-\frac{M_{\ell}}{8} \nonumber
\end{equation}
with the same probability, for each $\ell$. Therefore, 
\begin{equation*}
  Z_{\ell} \geq |C_{\ell}| - Y_{\ell} \geq |C_{\ell}| - \Big(|C_{\ell}| - \frac{M_{\ell}}{8}\Big)  \geq \frac{M_{\ell}}{8},
\end{equation*}
for each $\ell$, with probability at least $1-1/n$.
This implies $\Pr[Z_{\ell} < \frac{M_{\ell}}{8}]\leq 1/n$, so that $\Pr[\exists\ \ell\mbox{ with }Z_{\ell} < \frac{M_{\ell}}{8}] \leq k/n$. Therefore, with probability at least $1-k/n$, for all $\ell\in[k]$, we have $Z_{\ell}\geq \frac{M_{\ell}}{8}$.

Next, we consider the cost of the solution
(i.e., the total number of tags used). The rounding ensures that $\Pr[X_{\ell}(j)=1]=x_{\ell}(j)$, for each $\ell, j$. Thus, by linearity of expectation, the expected cost of the solution is
\begin{equation*}
E\Big[\sum_{\ell}\sum_j  X_{\ell}{(j)}\Big] 
= \sum_j \sum_{\ell}x_{\ell}{(j)} \leq B \nonumber
    \end{equation*}
By Markov's inequality,  $Pr[\sum_{\ell}\sum_j  X_{\ell}{(j)} > 2B] \leq \frac{1}{2}$. 

Putting everything together, for each round, the probability of success (i.e., the cost is at most $2B$ and $Z_{\ell}\geq M_{\ell}/8$ for each $\ell$) is at least $\frac{1}{2} - \frac{k}{n}\geq \frac{1}{4}$, since $k\leq n/4$. Therefore, the probability that at least one of the $4\ln{n}$ rounds is a success is at least $1-(\frac{3}{4})^{4\ln{n}}\geq 1-\frac{1}{n}$.
\end{proof}

\subsection{Rounding algorithm when $M_{\ell}$ values are arbitrary}
\label{sse:rounding_two}

Algorithm \algo{} is won't ensure coverage of $\Omega(M_{\ell})$ within each cluster $C_{\ell}$, unless $\Delta$ (the degree of dependence between the sets of tags) becomes very small, which can limit the utility of the approach. We now present a different rounding method for this case, which allows $\Delta$ to be as in Theorem \ref{theorem:round1}, but leads to a worse approximation ratio. Our algorithm involves the following steps.

\begin{enumerate} 
\item
If the LP $(\mathcal{P})$ is not feasible,
return ``no feasible solution''. Else, let $x^*, z^*$ denote the optimal fractional solution for LP$(\mathcal{P})$.
\item
Run the following steps $4\ln{n}$ times.
\begin{itemize}
\item
For each $s_i \in S$, set $Z(i)=1$ with probability $z^*(i)$. Let $C'_{\ell}=\{s_i\in C_{\ell}: Z(i)=1\}$ for each $\ell$.
\item
For each $\ell, j$, define $x'_{\ell}(j)=0$ if $E(j)\cap \Big(\cup_{\ell} C'_{\ell}\Big)=\emptyset$; else define
\[
x'_{\ell}(j) = \frac{x^*_{\ell}(j)}{2}\cdot \max\Big\{\frac{1}{z^*(i)}: s_i\in C'_{\ell}, j\in t_i\Big\}
\]
\item
For every $j\in T$, and for each $\ell=1,\ldots,k$: with probability $x'_{\ell}(j)$,  round $X_{\ell}(j)=1$ and $X_{\ell'}(j)=0$ for all $\ell'\neq\ell$. With probability $1-\sum_{\ell} x'_{\ell}(j)$, set $X_{\ell'}(j)=0$ for all $\ell'$.

\item
For each $s_i$, define $Z'(i)=1$ if $X_{\ell}(j)=1$ for some $j\in t_i$, and $Z'_{\ell} = \sum_{s_i\in C_{\ell}} Z'(i)$, for each $\ell$.\smallskip
\item
If $Z'_{\ell}\geq c\cdot M_{\ell}$ for each $\ell$ and $\sum_{\ell}\sum_j X_{\ell}(j)\leq \eta B$, return $X$ as the solution and \textbf{stop}.
\end{itemize}
\item
Return failure.
\end{enumerate}


\begin{lemma}
\label{lemma:xprime}
For each $\ell\in[k]$ and $j\in T$,  $x'_{\ell}(j)\leq\frac{1}{2}$
\end{lemma}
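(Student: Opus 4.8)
The plan is to reduce the desired bound to the pointwise inequality $x^*_{\ell}(j) \le z^*(i)$ for every object $s_i$ with $j \in t_i$ lying in cluster $C_{\ell}$. Recall that $x'_{\ell}(j) = \frac{x^*_{\ell}(j)}{2}\cdot\max\{1/z^*(i) : s_i \in C'_{\ell},\, j \in t_i\}$ whenever the set under the maximum is nonempty (and $x'_{\ell}(j)=0$ otherwise, in which case the claim is immediate). If $x^*_{\ell}(j) = 0$ the claim is again immediate, so assume $x^*_{\ell}(j) > 0$. Every $s_i \in C'_{\ell}$ has $Z(i)=1$, an event of probability $z^*(i)$, so $z^*(i) > 0$ and $1/z^*(i)$ is well defined. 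Thus it suffices to prove $x^*_{\ell}(j) \le z^*(i)$ for each such $i$: this gives $1/z^*(i) \le 1/x^*_{\ell}(j)$, hence $\max\{1/z^*(i)\} \le 1/x^*_{\ell}(j)$, and therefore $x'_{\ell}(j) \le \frac{x^*_{\ell}(j)}{2}\cdot\frac{1}{x^*_{\ell}(j)} = \frac{1}{2}$.

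The main work, then, is to establish $x^*_{\ell}(j) \le z^*(i)$, and here the subtlety is that the LP constraint $\sum_{j'\in t_i} x^*_{\ell}(j') \ge z^*(i)$ points the ``wrong way'' (it upper-bounds $z^*(i)$ rather than lower-bounding it). I would resolve this by a without-loss-of-generality normalization of the fractional solution: since the $z$ variables do not appear in the objective $\sum_{\ell}\sum_j x_{\ell}(j)$, I may replace each $z^*(i)$ (for $s_i \in C_{\ell}$) by $\min\{1,\, \sum_{j'\in t_i} x^*_{\ell}(j')\}$ without changing the cost. I would then check that this keeps the solution feasible: raising the $z^*(i)$ values only makes the coverage constraints $\sum_{s_i\in C_{\ell}} z^*(i) \ge M_{\ell}$ easier to satisfy, while $z^*(i) \le \sum_{j'\in t_i}x^*_{\ell}(j')$ and $z^*(i)\le 1$ hold by construction.

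After this normalization, a two-case argument finishes the pointwise bound. If $z^*(i) = 1$, then $x^*_{\ell}(j) \le 1 = z^*(i)$, using $x^*_{\ell}(j)\le \sum_{\ell'} x^*_{\ell'}(j) \le 1$ from the constraint $\sum_{\ell'} x_{\ell'}(j) \le 1$. If instead $z^*(i) < 1$, then the normalization forces $z^*(i) = \sum_{j'\in t_i} x^*_{\ell}(j')$, and since every summand is nonnegative and $j \in t_i$, we get $x^*_{\ell}(j) \le \sum_{j'\in t_i} x^*_{\ell}(j') = z^*(i)$. In either case $x^*_{\ell}(j) \le z^*(i)$, which completes the argument.

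The step I expect to be the main obstacle is the normalization itself: one must argue carefully that enlarging the $z^*(i)$ preserves feasibility, in particular that the first family of constraints continues to hold and that no constraint couples $z^*(i)$ to a cluster other than the one containing $s_i$ (so that only $x^*_{\ell}$ is relevant). Once this normalization is justified, the remaining case analysis and the final substitution are routine.
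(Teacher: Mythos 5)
Your proposal is correct and follows essentially the same route as the paper: the paper's proof also normalizes the LP solution so that (w.l.o.g.) $z^*(i)=\sum_{j'\in t_i}x^*_{\ell}(j')$, deduces $x^*_{\ell}(j)\le z^*(i)$ for every $s_i\in E(j)\cap C'_{\ell}$, and concludes $x'_{\ell}(j)\le \tfrac12$. Your version is in fact slightly more careful, since capping the normalized $z^*(i)$ at $1$ and handling that case via the disjointness constraint $\sum_{\ell'}x^*_{\ell'}(j)\le 1$ patches a minor imprecision in the paper's ``increase $z(i)$'' step, which ignores the upper bound $z(i)\le 1$.
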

\begin{proof}
For any $j$, we can assume, wlog, that $\sum_{j\in t_i} x_{\ell}(j)=z(i)$ for all $i$
(else, if $\sum_{j\in t_i} x_1(j) > z(i)$, we can increase $z(i)$ and still keep the
solution feasible).
This means $x_{\ell}(j)\leq z(i)$ for all $i\in E(j)$.
If $E(j)\cap \big(\cup_{\ell} C'_{\ell}\big)=\emptyset$, $x'_{\ell}(j)=0\leq\frac{1}{2}$. So assume
$E(j)\cap \big(\cup_{\ell} C'_{\ell}\big)\neq\emptyset$. 
Therefore, for each $s_i\in C'_{\ell}$ such that $j\in t_i$, we have 
$\frac{x_{\ell}(j)}{z(i)}\leq 1$, which implies 
\[
x_{\ell}(j)\max\{\frac{1}{z(i)}: s_i\in C'_1, j\in t_i\} \leq 1.
\]
This implies $x'_{\ell}(j)\leq\frac{1}{2}$.
\end{proof}

\begin{lemma}
\label{lemma:sum_i}
For each $\ell\in[k]$ and $s_i\in C'_{\ell}$: $\sum_{j\in t_i} x'_{\ell}(j) \geq \frac{1}{2}$.
\end{lemma}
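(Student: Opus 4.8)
The plan is to fix $\ell\in[k]$ and $s_i\in C'_{\ell}$ and then argue termwise, bounding the contribution of each $j\in t_i$ to the sum. A preliminary observation I would make is that $z^*(i)>0$ whenever we are in the situation of the lemma: the membership $s_i\in C'_{\ell}$ only occurs when the rounding step sets $Z(i)=1$, which happens with probability $z^*(i)$, so conditioning on $s_i\in C'_{\ell}$ forces $z^*(i)>0$. This makes every reciprocal $1/z^*(i)$ appearing below well defined.

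Next, for any $j\in t_i$ I would note that $s_i$ itself witnesses $E(j)\cap\big(\cup_{\ell} C'_{\ell}\big)\neq\emptyset$: indeed $j\in t_i$ gives $s_i\in E(j)$, and by assumption $s_i\in C'_{\ell}$. Hence the nontrivial branch in the definition of $x'_{\ell}(j)$ is the one that applies. Crucially, $s_i$ is one of the objects over which the maximum in that definition is taken (it satisfies $s_i\in C'_{\ell}$ and $j\in t_i$), so
\[
\max\Big\{\tfrac{1}{z^*(i')}: s_{i'}\in C'_{\ell},\ j\in t_{i'}\Big\}\ \geq\ \frac{1}{z^*(i)}.
\]
Substituting this lower bound into the definition of $x'_{\ell}(j)$ yields the key per-coordinate estimate $x'_{\ell}(j)\geq x^*_{\ell}(j)/(2z^*(i))$, valid for every $j\in t_i$.

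Finally I would sum this estimate over $j\in t_i$, pull out the common factor $1/(2z^*(i))$, and invoke the covering constraint of the LP $(\mathcal{P})$, namely $\sum_{j\in t_i} x^*_{\ell}(j)\geq z^*(i)$ (equivalently, the wlog equality already exploited in Lemma~\ref{lemma:xprime}), to obtain
\[
\sum_{j\in t_i} x'_{\ell}(j)\ \geq\ \frac{1}{2z^*(i)}\sum_{j\in t_i} x^*_{\ell}(j)\ \geq\ \frac{z^*(i)}{2z^*(i)}\ =\ \frac{1}{2}.
\]
There is essentially no hard step here: the whole argument is a short chain of inequalities once the definition of $x'_{\ell}(j)$ is unwound. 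The only point requiring a moment of care is the well-definedness of $1/z^*(i)$, which is exactly what conditioning on $s_i\in C'_{\ell}$ guarantees; everything else follows from the observation that the maximum in the definition ranges over a set containing $s_i$ itself, combined with the LP covering constraint.
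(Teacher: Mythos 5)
Your proposal is correct and follows essentially the same route as the paper's proof: both establish the termwise bound $x'_{\ell}(j)\geq x^*_{\ell}(j)/(2z^*(i))$ by observing that $s_i$ itself appears in the set over which the maximum is taken, then sum over $j\in t_i$ and apply the LP covering constraint $\sum_{j\in t_i} x^*_{\ell}(j)\geq z^*(i)$. The only difference is that you spell out the well-definedness of $1/z^*(i)$ and the non-emptiness of $E(j)\cap\big(\cup_{\ell}C'_{\ell}\big)$, details the paper leaves implicit.
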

\begin{proof}
Consider any $s_i\in C'_{\ell}$. 
Then, $x'_{\ell}(j)\geq \frac{x_{\ell}(j)}{2z(i)}$, which implies
\[
\sum_{j\in t_i} x'_{\ell}(j)\geq \frac{1}{2}\sum_{j\in t_i} \frac{x_{\ell}(j)}{z(i)}\geq \frac{1}{2}
\]
\end{proof}

\begin{lemma}
\label{lemma:expxprime}
For all $\ell, j$: $E[x'_{\ell}(j)]\leq \eta\cdot \frac{x_{\ell}(j)}{2}$.
\end{lemma}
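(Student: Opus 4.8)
The plan is to upper-bound the random quantity $x'_\ell(j)$ by a sum (rather than a maximum), and then take expectations using only linearity. Recall from the definition in the algorithm that, writing $Z(i)$ for the indicator that $s_i$ is selected into $C'_\ell$ (so $Z(i)=1$ with probability $z^*(i)$), the value $x'_\ell(j)$ is nonzero only when $\{s_i\in C'_\ell : j\in t_i\}\neq\emptyset$, and in that case $x'_\ell(j)=\frac{x^*_\ell(j)}{2}\max\{1/z^*(i): s_i\in C'_\ell,\ j\in t_i\}$; otherwise $x'_\ell(j)=0$. Following the convention of the preceding lemma proofs I will write the LP optimum simply as $x_\ell(j)$ and $z(i)$.

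First I would replace the maximum by a sum. Since each term $1/z(i)$ is nonnegative, the maximum over the (random) index set is at most the sum of those terms. Letting $A=E(j)\cap C_\ell=\{s_i\in C_\ell: j\in t_i\}$ be the \emph{fixed} set of objects in cluster $\ell$ carrying tag $j$, this gives the pointwise (per-sample) bound
\[
x'_\ell(j)\ \leq\ \frac{x_\ell(j)}{2}\sum_{s_i\in A}\frac{Z(i)}{z(i)}.
\]
This bound holds on every sample, including the case where $\{s_i\in C'_\ell: j\in t_i\}$ is empty, since then both sides equal $0$.

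Next I would take expectations and exploit the cancellation built into the scaling. By linearity of expectation and $E[Z(i)]=z(i)$, each summand contributes $E[Z(i)]/z(i)=1$, so $E\big[\sum_{s_i\in A} Z(i)/z(i)\big]=|A|$. Hence $E[x'_\ell(j)]\leq \frac{x_\ell(j)}{2}\,|A|$. Finally, since $A=E(j)\cap C_\ell\subseteq E(j)$, we have $|A|\leq |E(j)|\leq \eta$ by the definition of $\eta$, and substituting gives $E[x'_\ell(j)]\leq \eta\cdot\frac{x_\ell(j)}{2}$, as claimed.

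The only delicate step — and the one doing the real work — is the replacement of the maximum by the sum. A maximum over a randomly chosen index set is awkward to handle directly, because the index attaining it is itself random and its expectation is not clean to compute exactly; the sum upper bound sidesteps this entirely. The precise factor $1/z(i)$ appearing in the definition of $x'_\ell(j)$ is exactly what makes each term's expectation collapse to $1$, so that the $\eta$ factor emerges purely from the trivial set inclusion $A\subseteq E(j)$. Everything else is linearity of expectation and nonnegativity, so I expect no genuine technical obstacle beyond spotting this bound.
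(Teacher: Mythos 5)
Your proof is correct and follows essentially the same route as the paper's: bound the maximum over the random set $\{s_i\in C'_\ell: j\in t_i\}$ by the corresponding sum, take expectations so that each term contributes $\Pr[s_i\in C'_\ell]/z^*(i)=1$, and conclude via $|E(j)\cap C_\ell|\leq\eta$. Your version is in fact written a bit more carefully than the paper's (which has some index typos), but there is no substantive difference.
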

\begin{proof}
By construction, we have 
$x'_{\ell}(j) = \frac{x_{\ell}(j)}{2}\cdot \max\{\frac{1}{z(i)}: s_i\in C'_{\ell}, j\in t_i\}$.
Therefore,
\[
x'_{\ell}(j) \leq \frac{x_{\ell}(j)}{2}\sum_{s_i\in C'_{\ell}, j\in t_i} \frac{1}{z(i)}
\]
This implies
\begin{eqnarray*}
E[x'_{\ell}(j)] &\leq& \frac{x_{\ell}(j)}{2} \sum_{i\in C_1, j\in t_i} \frac{1}{z(i)}\Pr[s_i\in C_1] \\
&=& \frac{x_1(j)}{2} \sum_{s_i\in C_1, j\in t_i}  \frac{1}{z(i)}z(i)\\
&=&\frac{x_{\ell}(j)}{2} \sum_{s_i\in C_1, j\in t_i} 1\\
&\leq& \frac{x_1(j)}{2} |E(j)|
\;\leq\; \eta\frac{x_{\ell}(j)}{2}
\end{eqnarray*}
\end{proof}

\begin{observation}
\label{obs:x1x2}
For any fixed $\ell$, the variables $X_{\ell}(j)$ are all independent.
Further, for each $j$, $\sum_{\ell}X_{\ell}(j)\leq 1$.
\end{observation}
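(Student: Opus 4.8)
The plan is to establish the two assertions separately, reading them off directly from the second rounding step, while being explicit about the probability space in which they hold. The one genuine subtlety is that the rounding probabilities $x'_{\ell}(j)$ are themselves random: they are functions of the first-stage outcome, namely the indicators $Z(i)$ that determine the sets $C'_{\ell}$. Since a single tag $j$ can occur in several objects $s_i$, the values $x'_{\ell}(j)$ for different $j$ can be correlated through this shared first-stage randomness, so full (unconditional) independence across $j$ cannot be expected. I would therefore work \emph{conditioned} on the first stage, i.e., on the realized values of $\{Z(i)\}_i$ (equivalently, on the sets $C'_{\ell}$ and hence on all of the $x'_{\ell}(j)$). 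This is exactly the probability space in which the observation is used in the subsequent concentration analysis.

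For the second claim, I would fix a tag $j$ and inspect the single rounding trial applied to it. By construction this trial has mutually exclusive outcomes: for each $\ell$ it sets $X_{\ell}(j)=1$ and $X_{\ell'}(j)=0$ for all $\ell'\neq\ell$ with probability $x'_{\ell}(j)$, and with the remaining probability $1-\sum_{\ell}x'_{\ell}(j)$ it sets $X_{\ell'}(j)=0$ for every $\ell'$. In every one of these outcomes at most one of the indicators $X_{\ell}(j)$, $\ell\in[k]$, equals $1$; hence $\sum_{\ell}X_{\ell}(j)\le 1$ holds deterministically. (For this trial to be a bona fide probability distribution one needs $\sum_{\ell}x'_{\ell}(j)\le 1$; Lemma~\ref{lemma:xprime} bounds each term $x'_{\ell}(j)$ by $1/2$, but the claim here concerns only the \emph{outcome} and so holds whenever the rounding is well defined.)

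For the first claim I would argue that, conditioned on the first stage, the numbers $x'_{\ell}(j)$ are fixed constants and the algorithm draws one fresh, independent categorical variable per tag $j$. Formally, for each $j$ let $U_j$ denote the independent source of randomness used in its rounding trial, so that the vector $(X_{1}(j),\dots,X_{k}(j))$ is a deterministic function of $U_j$ alone; in particular each $X_{\ell}(j)$ is a function of $U_j$ only. Since the $U_j$ are mutually independent across $j$ (the trials for distinct tags share no randomness), functions of disjoint $U_j$ are mutually independent. Fixing $\ell$ and letting $j$ range over $T$ then yields that $\{X_{\ell}(j)\}_{j\in T}$ is a collection of mutually independent random variables, as claimed. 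This mirrors the independence argument already invoked for Algorithm \algo{}, the only difference being that here it must be read conditionally on the first stage.

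The proof is essentially a bookkeeping argument, so I do not anticipate a real obstacle; the only thing that must be gotten right is the conditioning, since unconditional independence genuinely fails once two tags co-occur in a common object. I would make sure that both the statement of the observation and its later uses are interpreted in this same conditional space.
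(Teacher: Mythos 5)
Your proof is correct and matches the intended argument: the paper states Observation~\ref{obs:x1x2} without proof, treating both claims as immediate from the fact that the rounding step uses one independent, mutually exclusive trial per tag. Your additional point about conditioning on the first-stage outcome $\{Z(i)\}$ is a legitimate refinement the paper leaves implicit --- since the $x'_{\ell}(j)$ are functions of the random sets $C'_{\ell}$, unconditional independence across tags sharing an object would indeed fail, and the subsequent application of Janson's bound in Lemma~\ref{lemma:Y1-bound} is in fact carried out in exactly the conditional space you describe.
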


\begin{lemma}
\label{lemma:M1prime}
Assume $M_{\ell}\geq \frac{2}{\epsilon^2}\log{n}$ for all $\ell\in [k]$.
Then,
$M'_{\ell}\geq (1-\epsilon)M_{\ell}$ with probability at least $1-\frac{1}{n}$.
\end{lemma}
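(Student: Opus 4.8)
The plan is to recognize that $M'_{\ell} = |C'_{\ell}|$ is a sum of \emph{independent} indicator variables, so that a standard multiplicative Chernoff bound applies directly---in sharp contrast to the situation in Algorithm \algo{}, where dependencies among coverage events forced the use of Janson's inequality. Here $M'_{\ell}$ denotes $|C'_{\ell}|$, the number of objects of $C_{\ell}$ selected in the first sampling step, where each $s_i$ is placed in $C'_{\ell}$ independently with probability $z^*(i)$. Writing $M'_{\ell} = \sum_{s_i \in C_{\ell}} Z(i)$ with the $Z(i)$ independent $\mathrm{Bernoulli}(z^*(i))$ variables, linearity of expectation together with LP feasibility gives $E[M'_{\ell}] = \sum_{s_i \in C_{\ell}} z^*(i) \geq M_{\ell}$, since the constraint $\sum_{s_i \in C_{\ell}} z^*(i) \geq M_{\ell}$ holds in $(\mathcal{P})$.

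First I would invoke the lower-tail Chernoff bound: for a sum $M'_{\ell}$ of independent $\{0,1\}$ variables with mean $\mu_{\ell} := E[M'_{\ell}]$,
\[
\Pr\big[M'_{\ell} \leq (1-\epsilon)\mu_{\ell}\big] \;\leq\; \exp\!\Big(-\frac{\epsilon^2 \mu_{\ell}}{2}\Big).
\]
The one point that needs a small argument is that we want to control the deviation below $(1-\epsilon)M_{\ell}$, not below $(1-\epsilon)\mu_{\ell}$. Since $\mu_{\ell} \geq M_{\ell}$, we have $(1-\epsilon)\mu_{\ell} \geq (1-\epsilon)M_{\ell}$, so the event $\{M'_{\ell} < (1-\epsilon)M_{\ell}\}$ is contained in $\{M'_{\ell} < (1-\epsilon)\mu_{\ell}\}$; hence the displayed Chernoff bound already upper-bounds the probability we care about.

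Finally I would substitute the hypothesis $M_{\ell} \geq \frac{2}{\epsilon^2}\log n$. Using monotonicity of the exponential together with $\mu_{\ell} \geq M_{\ell}$,
\[
\exp\!\Big(-\frac{\epsilon^2 \mu_{\ell}}{2}\Big) \;\leq\; \exp\!\Big(-\frac{\epsilon^2 M_{\ell}}{2}\Big) \;\leq\; \exp(-\log n) \;=\; \frac{1}{n},
\]
which yields $\Pr[M'_{\ell} \geq (1-\epsilon)M_{\ell}] \geq 1 - 1/n$, as required. I do not anticipate any genuine obstacle here: the entire difficulty of the earlier analysis stemmed from dependence among coverage events, and that dependence is completely absent in this first, independent sampling stage. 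The only care needed is the bookkeeping step of passing from the true mean $\mu_{\ell}$ to its lower bound $M_{\ell}$, which is handled cleanly by the containment of events noted above.
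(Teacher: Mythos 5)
Your proof is correct and follows essentially the same route as the paper's: write $M'_{\ell}$ as a sum of independent Bernoulli$(z^*(i))$ variables, bound $E[M'_{\ell}]\geq M_{\ell}$ via the LP constraint, and apply the multiplicative lower-tail Chernoff bound. The only difference is that you make explicit the event-containment step for passing from a deviation below $(1-\epsilon)\mu_{\ell}$ to one below $(1-\epsilon)M_{\ell}$, which the paper leaves implicit.
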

\begin{proof}
Consider any fixed $\ell$. We have
\begin{equation*}
M_{\ell}^{'} = |C_{\ell}^{'}| = \sum_{s_i \in C_{\ell}} Z_i 
\end{equation*}
The expected value of $M_{\ell}^{'}$ is
\begin{eqnarray*}
	E[M_{\ell}^{'}] &=& \sum_{s_i \in C_{\ell}} E[Z_i] \\ 
	           &=& \sum_{s_i \in C_{\ell}} z_i
	            \geq M_{\ell} \;  (from \; constraints)
\end{eqnarray*}
By the Chernoff bound \cite{DBLP:books/daglib/0025902},
\begin{eqnarray*}
    Pr(M_{\ell}^{'} < (1-\epsilon) E[M_{\ell}^{'}]) & \leq& \mbox{exp}(- \frac{E[M_{\ell}^{'}] \epsilon^2}{2}) \\ \nonumber
                    & \leq& \mbox{exp}(-\frac{M_{\ell}\epsilon^2}{2}) \\
                      & \leq& \mbox{exp}(-\log n) 
                      \leq \frac{1}{n} 
\end{eqnarray*}
Therefore, $Pr(M_{\ell}^{'} \geq (1-\epsilon) M_{\ell}) \geq 1 - \frac{1}{n}$.
\end{proof}

For notational simplicity, we discuss the analysis for $k=2$ in rest of this section.
Let $Y_1(i)=1$ if $X_1(j)=0$ for all $j\in t_i$. Let $Y_1 = \sum_{s_i\in C'_1} Y_1(i)=|V_1(X_1,  X_2)|$
denote the  number of sets in $C_1$ which are covered by $X_1$. Similarly, we define $Y_2(i)$ and $Y_2$. 

\begin{lemma}
\label{lemma:Y1-bound}
Suppose $\Delta+1\leq 0.01M'_1/\log{n}$. Then,
$\Pr[Y_1 > 0.86 M'_1] \leq \frac{1}{n}$ and
$\Pr[Y_2 > 0.86 M'_2] \leq \frac{1}{n}$.
\end{lemma}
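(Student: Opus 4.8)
The plan is to follow the strategy of Lemma~\ref{Y1bound}: exhibit the special dependency structure among the indicator variables and then invoke Janson's upper tail bound \cite{Janson2002TheIU}. The essential new wrinkle is that the probabilities $x'_\ell(j)$ driving the second rounding are themselves random (they depend on which objects were selected into the sets $C'_\ell$ in the first stage), so I would first \emph{condition} on the outcome of the first stage, i.e., fix the sets $C'_1, C'_2$. Once these are fixed, $M'_\ell=|C'_\ell|$ and every $x'_\ell(j)$ is a constant, and the analysis reduces to the independent second-stage rounding. A conditional tail bound holding for every realization of $C'_1,C'_2$ immediately yields the unconditional statement by the law of total probability.

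Recall that $Y_1(i)=1$ precisely when $X_1(j)=0$ for every $j\in t_i$, i.e., when $s_i\in C'_1$ receives no tag in the descriptor $X_1$, so $Y_1=\sum_{s_i\in C'_1}Y_1(i)$ counts the objects of $C'_1$ left uncovered by $X_1$. Writing $\xi_{j1}$ for the indicator that $X_1(j)=0$, we have $Y_1(i)=\prod_{j\in t_i}\xi_{j1}$. By Observation~\ref{obs:x1x2} the variables $X_1(j)$, and hence the $\xi_{j1}$, are independent across $j$; therefore $Y_1(i)$ and $Y_1(i')$ are independent whenever $t_i\cap t_{i'}=\emptyset$. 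Thus the dependency graph on $\{Y_1(i)\}$ has maximum degree at most $\Delta$, which is exactly the structure required by \cite{Janson2002TheIU}, so Janson's bound applies with parameter $\Delta+1$.

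Next I would bound $\lambda:=E[Y_1\mid C'_1,C'_2]$. For $s_i\in C'_1$, Lemma~\ref{lemma:sum_i} gives $\sum_{j\in t_i}x'_1(j)\geq\tfrac12$, so using $1-x\leq e^{-x}$ we obtain $E[Y_1(i)\mid C'_1,C'_2]=\prod_{j\in t_i}(1-x'_1(j))\leq e^{-1/2}$, and hence $\lambda\leq e^{-1/2}M'_1<0.61\,M'_1$. I would then set $t=0.86\,M'_1-\lambda$, so that $\{Y_1>0.86\,M'_1\}=\{Y_1>\lambda+t\}$ and $t\geq(0.86-e^{-1/2})M'_1\geq0.25\,M'_1$; note also $\lambda+t/3=0.86\,M'_1-\tfrac23 t\leq0.69\,M'_1$. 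Feeding the hypothesis $\Delta+1\leq0.01\,M'_1/\log n$ together with these two estimates into Janson's exponent $t^2/\big(4(\Delta+1)(\lambda+t/3)\big)$, a short calculation shows it is at least $2\log n$. Janson's inequality then gives $\Pr[Y_1>0.86\,M'_1\mid C'_1,C'_2]\leq(\Delta+1)e^{-2\log n}\leq M'_1/n^2\leq 1/n$, using $M'_1\leq|C_1|\leq n$. Averaging over the first stage removes the conditioning, and the identical argument applied to cluster $C_2$ (under the analogous hypothesis on $M'_2$) yields the bound for $Y_2$.

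I expect the main obstacle to be the careful bookkeeping forced by the two-stage randomness: the quantities $M'_\ell$ and $x'_\ell(j)$ are random, and the hypothesis $\Delta+1\leq0.01\,M'_\ell/\log n$ itself references the random $M'_\ell$, so the Janson estimate must be carried out for a fixed realization of $C'_1,C'_2$ and only afterward averaged. A secondary, purely numerical delicacy is that the constant $0.86$ and the value $e^{-1/2}$ have to be slotted in just tightly enough: using the crude bound $\lambda+t/3\leq\lambda+t=0.86\,M'_1$ falls short of the $2\log n$ threshold, so one must retain the sharper estimate $\lambda+t/3\leq0.69\,M'_1$. I would verify these constants explicitly before finalizing.
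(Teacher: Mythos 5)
Your proposal is correct and follows essentially the same route as the paper's proof: the identical dependency structure feeding Janson's upper tail bound, the bound $\lambda\leq e^{-1/2}M'_1$ via Lemma~\ref{lemma:sum_i}, a deviation of (at least) $0.25\,M'_1$, and the same numerical check that the exponent exceeds $2\log n$ (including your correct observation that the cruder estimate $\lambda+t/3\leq 0.86\,M'_1$ would only give about $1.82\log n$). Your explicit conditioning on $C'_1,C'_2$ before applying Janson is a welcome clarification of a point the paper leaves implicit, but it does not change the substance of the argument.
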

\begin{proof}
First, observe that 
\[
\Pr[Y_1(i)=1] = \prod_{j\in t_i}(1-x'_1(j)) \leq \mbox{exp}(-\sum_{j\in t_i} x'_1(j)) \leq e^{-1/2},
\]
since $\sum_{j\in t_i} x'_1(j)\geq 1/2$, by Lemma \ref{lemma:sum_i}.
Therefore, $\lambda = E[Y_1]  = \sum_{s_i\in C'_1} E[Y_1(i)] \leq \frac{1}{\sqrt{e}} M'_1$. 

We apply the upper tail bound of \cite{Janson2002TheIU} for $Y_1$.
Let $\Gamma=T$, and let $\xi_j$ be an indicator that is $1$ if $X_1(j)=0$.
Then, $Y_1(i)=\prod_{j\in t_i} \xi_j$ (by independence of $X_1(j)$'s, and therefore, of $\xi_j$'s).
Further, $Y_1(i)$ and $Y_1(i')$ are independent if $t_i\cap t_i'=\emptyset$.
Therefore, the random variables $Y_1(i)$ are of the type considered in \cite{Janson2002TheIU}.
Let $\Delta$ be as defined earlier.
Then, the bound from \cite{Janson2002TheIU} gives
\[
\Pr[Y_1 \geq \lambda + t] \leq (\Delta + 1)\mbox{exp}\Big(-\frac{t^2}{4(\Delta+1)(\lambda+t/3)}\Big)
\]

We choose $t=cM'_1$, with $c=0.25$. By assumption, $\Delta+1\leq \frac{dM'_1}{\log{n}}$ for $d=0.01$.
Then, we have
\[
(\Delta+1)\lambda \leq \frac{dM'_1}{\log{n}}\frac{M'_1}{\sqrt{e}} = \frac{d}{c^2\sqrt{e}\log{n}} (cM'_1)^2
= \frac{d}{c^2\sqrt{e}\log{n}} t^2
\]
and
\[
(\Delta+1)t \leq \frac{dM'_1}{\log{n}}t = \frac{d}{c\log{n}}cM'_1t = \frac{d}{c\log{n}}t^2
\]
Therefore,
\begin{eqnarray*}
4(\Delta+1)(\lambda + t/3) &\leq& 4\Big(\frac{d}{c^2\sqrt{e}\log{n}} + \frac{d}{3c\log{n}}\Big)t^2 \\
&=& \frac{4d}{c\log{n}}\Big(\frac{1}{c\sqrt{e}}+\frac{1}{3}\Big)t^2,
\end{eqnarray*}
which implies
\[
\frac{t^2}{4(\Delta+1)(\lambda + t/3)} \geq \frac{\log{n}}{\frac{4d}{c}\Big(\frac{1}{c\sqrt{e}}+\frac{1}{3}\Big)}
\geq 2\log{n},
\]
for $c=0.25, d=0.01$.

Since $\lambda\leq \frac{1}{\sqrt{e}}M'_1\leq 0.607 M'_1$, and $t=0.25 M'_1$,
this, in turn, implies
\begin{eqnarray*}
\Pr[Y_1 > 0.86 M'_1] &\leq& \Pr[Y_1 \geq 0.857 M'_1] \leq (\Delta+1)e^{-2\log{n}} \\
&\leq& M'_1\frac{1}{n^2}\leq \frac{1}{n}
\end{eqnarray*}
\end{proof}

\noindent
\begin{theorem}\label{theorem:round2}
Suppose the LP relaxation $(\mathcal{P})$ is feasible for $B, M_1, M_2$,
and $M_1, M_2\geq \frac{2}{\epsilon^2}\log{n}$.
With probability at least $\frac{1}{2}-\frac{2}{n}$, the above rounding algorithm gives a solution $X_1, X_2$
such that $|V_1|\geq 0.14(1-\epsilon)M_1$, $|V_2|\geq 0.14(1-\epsilon)M_2$ and
$|X_1|+|X_2|\leq \eta B$.
\end{theorem}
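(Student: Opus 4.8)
The plan is to analyze a single execution of the inner loop (Step~2) and show that the two coverage guarantees and the cost guarantee each hold with good probability, then combine them by a union bound. The key is to respect the two-stage nature of the randomness: first the variables $Z(i)$ are sampled, fixing the sets $C'_1, C'_2$ and hence $M'_1 = |C'_1|$, $M'_2 = |C'_2|$; then the tags are rounded according to the rescaled fractional values $x'_\ell(j)$. Since the coverage lemmas are most naturally read as conditional on the first stage, I would carry out the argument by conditioning on the first-stage outcome and take the outer expectation only at the end.

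First I would establish the coverage guarantee for each cluster. Conditioned on the first stage, Lemma~\ref{lemma:sum_i} gives $\sum_{j \in t_i} x'_\ell(j) \geq 1/2$ for every $s_i \in C'_\ell$, which together with the per-$\ell$ independence of the $X_\ell(j)$ from Observation~\ref{obs:x1x2} is exactly what yields $\Pr[Y_\ell(i)=1] \leq e^{-1/2}$ in Lemma~\ref{lemma:Y1-bound}; here $Y_\ell$ denotes the number of objects of $C'_\ell$ left uncovered. Lemma~\ref{lemma:Y1-bound} then gives $Y_\ell \leq 0.86\,M'_\ell$ with probability at least $1 - 1/n$, so the number of covered objects inside $C'_\ell$, which lower-bounds $|V_\ell|$ (every such object lies in $C_\ell$ and is covered), is at least $M'_\ell - Y_\ell \geq 0.14\,M'_\ell$. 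Invoking Lemma~\ref{lemma:M1prime}, which (using $M_\ell \geq \frac{2}{\epsilon^2}\log n$) gives $M'_\ell \geq (1-\epsilon)M_\ell$ with probability at least $1 - 1/n$, yields $|V_\ell| \geq 0.14(1-\epsilon)M_\ell$. A subtle point to check is that the hypothesis $\Delta+1 \leq 0.01\,M'_\ell/\log n$ of Lemma~\ref{lemma:Y1-bound} actually holds: since $M'_\ell$ is itself random, I would argue that on the high-probability event $M'_\ell \geq (1-\epsilon)M_\ell$ this is implied by the assumed relation between $\Delta$ and $M_\ell$, so the two lemmas compose cleanly.

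Next I would bound the cost. The rounding sets $X_\ell(j)=1$ with probability $x'_\ell(j)$, so $\Pr[X_\ell(j)=1] = E[x'_\ell(j)]$, and Lemma~\ref{lemma:expxprime} gives $E[x'_\ell(j)] \leq \eta\,x^*_\ell(j)/2$. By linearity of expectation the expected cost is $\sum_\ell\sum_j \Pr[X_\ell(j)=1] \leq \frac{\eta}{2}\sum_\ell\sum_j x^*_\ell(j) = \frac{\eta B}{2}$, whence Markov's inequality gives $\Pr[\sum_\ell\sum_j X_\ell(j) > \eta B] \leq 1/2$. Here Lemma~\ref{lemma:xprime} ($x'_\ell(j) \leq 1/2$) ensures the rounding probabilities are well-defined, and Observation~\ref{obs:x1x2} guarantees the descriptors are disjoint.

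Finally I would combine the pieces by a union bound over the failure events: the two events $\{M'_\ell < (1-\epsilon)M_\ell\}$ and the two events $\{Y_\ell > 0.86\,M'_\ell\}$, each failing with probability at most $1/n$, together with the cost-overflow event failing with probability at most $1/2$. This yields an overall per-round success probability of at least $\frac{1}{2} - O(1/n)$, matching the stated $\frac{1}{2}-\frac{2}{n}$ (the outer loop of $4\ln n$ repetitions then boosts this to high probability, as in Theorem~\ref{theorem:round1}). The main obstacle is the bookkeeping of the two-stage randomness---specifically, applying Lemma~\ref{lemma:Y1-bound} conditionally on a first-stage outcome for which its hypothesis on $M'_\ell$ is met, and correctly lower-bounding $V_\ell$ (which may contain covered objects outside $C'_\ell$) by the covered objects inside $C'_\ell$.
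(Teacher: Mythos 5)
Your proposal is correct and follows essentially the same route as the paper's proof: Lemma~\ref{lemma:Y1-bound} for the per-cluster uncovered counts, Lemma~\ref{lemma:M1prime} to relate $M'_\ell$ to $M_\ell$, Lemma~\ref{lemma:expxprime} plus Markov for the cost, and a union bound over the failure events. Your explicit handling of the two-stage randomness and your observation that the hypothesis $\Delta+1\leq 0.01\,M'_\ell/\log n$ of Lemma~\ref{lemma:Y1-bound} concerns the \emph{random} quantity $M'_\ell$ (and so must be discharged on the high-probability event via an assumption on $\Delta$ that the theorem statement leaves implicit) is a point the paper's own proof glosses over, but it does not change the overall argument.
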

\begin{proof}
By Lemma \ref{lemma:Y1-bound}, we have $Y_1\leq 0.86 M'_1$ and $Y_2\leq 0.86 M'_2$ with probability at least $1-\frac{2}{n}$.
We have $|V_1\cap C'_1| + Y_1= M'_1$, which implies $|V_1| \geq 0.14 M'_1$ and $|V_2| \geq 0.14 M'_2$
with this probability.

By Lemma  \ref{lemma:M1prime}, $M'_1\geq(1-\epsilon)M_1$ and $M'_2\geq(1-\epsilon)M_2$
with probability $1-\frac{2}{n}$.
Therefore, with probability at least $1-\frac{4}{n}$, we have
$|V_1|\geq 0.14(1-\epsilon)M_1$, $|V_2|\geq 0.14(1-\epsilon)M_2$.

Finally, we consider the cost of the solution. 
By the rounding algorithm, we have $\Pr[X_1(j)=1] = x'_1(j)$ for all $j\in T$.
By Lemma \ref{lemma:expxprime}, we have
$E[X_1(j)] = E[x'_1(j)] \leq \frac{\eta}{2}x_1(j)$, and
$E[X_2(j)]\leq \frac{\eta}{2}x_2(j)$. Therefore,
\begin{eqnarray*}
E[X_1+X_2] &=& \sum_j E[X_1(j)]+E[X_2(j)] \\
&\leq& \frac{\eta}{2} \sum_j x_1(j)+x_2(j)\leq \eta B/2
\end{eqnarray*}
By Markov's inequality, we have $\Pr[X_1+X_2 > \eta B]\leq \frac{1}{2}$.

Putting everything together, with probability at least $\frac{1}{2} - \frac{2}{n}$,
we have $X_1+X_2 \leq \eta B$ and 
$|V_1|\geq 0.14(1-\epsilon)M_1$, $|V_2|\geq 0.14(1-\epsilon)M_2$.
\end{proof}

\subsection{Approximation for the \proboverlap{} problem}
\label{sec:overlap}

We consider the variation in which the overlap between descriptors is bounded
for the case $k=2$. So the objective is to find $T_1, T_2$ such that
$T_1$ and $T_2$ cover $C_1$ and $C_2$, $|T_1|+|T_2|\leq B$, and $|T_1\cap T_2|\leq B_o$,
where $B_o$ is an input parameter which captures the allowed overlap.
Consider the following LP.

\begin{eqnarray*}
\max \sum_i z(i) && \text{s.t.}\\
\forall s_i\in C_1: \sum_{j\in t_i} x_1(j) &\geq& z(i)\\
\forall s_i\in C_2: \sum_{j\in t_i} x_2(j) &\geq& z(i)\\
\sum_j x_1(j) + x_2(j) &\leq& B\\
\sum_{s_i\in C_1} z(i) \geq M_1 &\;,&
\sum_{s_i\in C_2} z(i) \geq M_2\\
\forall j: x_1(j) + x_2(j) & \leq & 1 + y(j)\\
\sum_j y(j) &\leq& B_o
\end{eqnarray*}
All the variables of $(\mathcal{P})$ are in range $[0,1]$.

Our rounding involves the following steps
\begin{itemize}
\item
Let $x, y, z$ be the optimal fractional solution for the above LP.
\item
For $j=1,\ldots,m$ in turn, and independently:
\begin{itemize}
\item
With probability $y(j)$: $X_1(j)=1, X_2(j)=1$
\item
With probability $x_1(j)-y(j)$: $X_1(j)=1, X_2(j)=0$
\item
With probability $x_2(j)-y(j)$: $X_1(j)=0, X_2(j)=1$
\item
With probability $1-x_1(j)-x_2(j)+y(j)$: $X_1(j)=X_2(j)=0$
\end{itemize}
\end{itemize}

\begin{theorem} \label{theorem:roundoverlap}
Suppose an instance of \proboverlap{} satisfies the following conditions, with $k=2$: (1) $M_{\ell} \geq a |C_{\ell}|$ for $\ell\in[k]$, and for some constant $a\in(0, 1]$, and (2) $(\Delta+1) \leq \min_{\ell}\frac{d |C_{\ell}|}{\log n}$ for a constant $d$.
If the LP relaxation $(\mathcal{P})$ is feasible,
with probability at least $1-\frac{1}{n}$, the above algorithm successfully returns a solution $X$, which ensures $Z_{\ell}\geq M_{\ell}/8$, $|X_1|+|X_2|\leq 3B$, and $|X_1\cap X_2|\leq 3B_o$.
\end{theorem}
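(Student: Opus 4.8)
The plan is to reuse the Janson-based coverage analysis of Theorem~\ref{theorem:round1}, to add two short Markov arguments for the cost and the overlap, and then to boost the resulting constant per-round success probability to $1-1/n$ by repeating the rounding $\Theta(\log n)$ times exactly as in Algorithm~\algo{}. First I would verify that the four-way rounding is a genuine probability distribution for each $j$: the four probabilities sum to $1$, and nonnegativity of $x_1(j)-y(j)$ and $x_2(j)-y(j)$ holds once $y(j)\le\min\{x_1(j),x_2(j)\}$, which may be assumed without loss of generality. Indeed $y$ occurs only in the constraints $x_1(j)+x_2(j)\le 1+y(j)$ and $\sum_j y(j)\le B_o$ and not in the objective, so an optimal solution can be taken with $y(j)=\max\{0,\,x_1(j)+x_2(j)-1\}$; then $y(j)\le x_1(j)+x_2(j)-1\le x_1(j)$ since $x_2(j)\le 1$, and symmetrically $y(j)\le x_2(j)$, while nonnegativity of the last probability is exactly $x_1(j)+x_2(j)\le 1+y(j)$. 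Computing marginals then gives $\Pr[X_\ell(j)=1]=x_\ell(j)$ for $\ell=1,2$ and $\Pr[X_1(j)=X_2(j)=1]=y(j)$, and since the rounding is independent across $j$, the variables $X_\ell(j)$ are mutually independent for each fixed $\ell$ (Observation~\ref{obs:x1x2}).

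For the coverage bound $Z_\ell\ge M_\ell/8$ I would follow Lemma~\ref{lemma:Z_kprob} and Lemma~\ref{Y1bound} almost verbatim, the one difference being that this rounding is not scaled by $1/2$. From $\Pr[Z(i)=1]=1-\prod_{j\in t_i}(1-x_\ell(j))\ge 1-e^{-z(i)}\ge(1-1/e)z(i)$ I obtain $E[Z_\ell]\ge(1-1/e)M_\ell\ge M_\ell/4$, hence $E[Y_\ell]=|C_\ell|-E[Z_\ell]\le(1-a/4)|C_\ell|$ for $Y_\ell=\sum_{s_i\in C_\ell}(1-Z(i))$. Writing $Y(i)=\prod_{j\in t_i}\xi_{j\ell}$ with $\xi_{j\ell}=\mathbf 1[X_\ell(j)=0]$ independent over $j$, the variables $Y(i)$ have exactly the dependency structure required by Janson's upper tail bound; taking $t=M_\ell/8$ and $(\Delta+1)\le d|C_\ell|/\log n$ for a suitable constant $d$, the same computation as in Lemma~\ref{Y1bound} gives $\Pr[Y_\ell\ge E[Y_\ell]+M_\ell/8]\le 1/n$, so that $Z_\ell=|C_\ell|-Y_\ell\ge M_\ell/8$. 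A union bound over $\ell=1,2$ makes both coverage requirements hold with probability at least $1-2/n$.

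For the two budgets I would use the marginals from the first step. Linearity of expectation gives $E[|X_1|+|X_2|]=\sum_j(x_1(j)+x_2(j))\le B$ and $E[|X_1\cap X_2|]=\sum_j y(j)\le B_o$ directly from the two budget constraints of the LP, so Markov's inequality yields $\Pr[|X_1|+|X_2|>3B]\le 1/3$ and $\Pr[|X_1\cap X_2|>3B_o]\le 1/3$. A union bound over the coverage, cost, and overlap failure events shows that a single round satisfies all three simultaneously with probability at least $1-2/n-1/3-1/3\ge 1/4$ for large $n$. Running the rounding $4\ln n$ independent times and returning the first successful round, the probability that every round fails is at most $(3/4)^{4\ln n}\le 1/n$, which establishes the claimed success probability.

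I expect the only delicate step to be the Janson concentration of $Y_\ell$, but that argument is already in place in Lemma~\ref{Y1bound} and only the constants change here (no $1/2$ scaling, giving $E[Z_\ell]\ge(1-1/e)M_\ell$ rather than $M_\ell/4$). The content genuinely specific to \proboverlap{} is therefore the first step—checking that the overlap-aware four-way rounding is a valid distribution with marginals $x_\ell(j)$ and pairwise-overlap marginal $y(j)$—together with the observation that the two separate budget constraints of the augmented LP control the expected cost and the expected overlap independently. The factor $3$ in both target bounds is precisely what makes each Markov bound equal $1/3$, leaving a constant per-round success probability to amplify.
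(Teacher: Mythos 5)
Your proposal is correct and follows essentially the same route as the paper's proof: the Janson-based coverage analysis of Theorem~\ref{theorem:round1} for $Z_\ell\geq M_\ell/8$, Markov's inequality applied separately to the cost and overlap budgets (each giving failure probability $1/3$), and amplification over $\Theta(\log n)$ independent rounds. Your additional verification that the four-way rounding is a valid distribution, via the normalization $y(j)=\max\{0,\,x_1(j)+x_2(j)-1\}$, and your observation that the unscaled rounding still gives $E[Z_\ell]\geq(1-1/e)M_\ell\geq M_\ell/4$, are points the paper leaves implicit but do not change the argument.
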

\begin{proof}  
Our proof follows on the same lines as the proof of Theorem \ref{theorem:round1}. Within a single round of Step 3 of the algorithm, with high probability we have $Z_1\geq M_1/8$, $Z_2\geq M_2/8$. The rounding ensures $\Pr[X_1(j)=1] = x_1(j)$, and $\Pr[X_2(j)=1] = x_2(j)$. This implies that $E[|X_1|+|X_2|]\leq B$, so that by the Markov inequality, $\Pr[|X_1|+|X_2|>3B]\leq 1/3$. Also, by the rounding, we have $\Pr[X_1(j)=X_2(j)]=y(j)$, for any $j$. Therefore, $E[|X_1\cap X_2|]=\sum_j y(j)\leq B_o$. By the Markov inequality, $\Pr[|X_1\cap X_2|>3B_0]\leq 1/3$. Therefore, each round of Step 3 of the algorithm succeeds with probability at least $1/3-1/n$, which in turn implies the high probability bound.
\end{proof}

\subsection{Dynamic Programming algorithm when $\gamma$ is small}
\label{sse:dp}

We show that a simple dynamic program  gives a $(\frac{1}{\gamma}, 1)$-approximation, when $k$ is a constant. To avoid notational clutter, we describe the algorithm here for $k=2$, and the extension to larger $k$ is direct.

We assume the tags in $T$ are ordered $1,\ldots,m$.
Consider $X_1, X_2\subseteq T$.
Let $V_1(X_1, X_2)=\{s_i\in C_1: t_i \cap X_1\neq\emptyset\}$ and 
$V_2(X_1, X_2)=\{s_i\in C_2: t_i\cap X_2\neq\emptyset\}$.
For a set $T$, let $\mathbf{1}(T\neq\emptyset)$ be an indicator variable which is $1$ if $T\neq\emptyset$.
Using this notation, we can write 
$|V_1(X_1, X_2)| = \sum_{i\in C_1}\mathbf{1}\{t_i\cap X_1\neq\emptyset\}$.
Let $w_1(X_1)=\sum_{j\in X_1} |E_1(j)|$ and 
$w_2(X_2)=\sum_{j\in X_2} |E_2(j)|$, where $E_k(j) = \{s_i \in C_k: j \in t_i\}$

Algorithm \algodp{} involves the following steps:
\begin{itemize}
\item
For integers $j\leq m$, $\ell\leq B$, $w_1\leq M_1$, $w_2\leq M_2$,
we maintain information $S(j, \ell, w_1,  w_2)$ in the following manner:
\begin{itemize}
\item
If there exists a solution $X_1,  X_2\subseteq\{1,\ldots,j\}$ such that
$X_1\cap X_2 =\phi$, $|X_1|+|X_2|\leq\ell$, $w_1(X_1)=w_1$ and $w_2(X_2)=w_2$, then
$S(j, \ell, w_1, w_2)=1$ 
\item
If no such solution $X_1, X_2\subseteq\{1,\ldots,j\}$ exists,
then $S(j, \ell, w_1, w_2)=0$.
\end{itemize}
\item
We compute the entries of $S(j, \ell, w_1, w_2)$ using the following recurrence, which can
be computed in a bottom up manner:
\begin{itemize}
\item
We initialize $S(1, 1, |E_1(1)|, 0)=1$, $S(1, 1, 0, |E_2(1)|)=1$, and $S(1, 0, 0, 0)=1$.
We have $S(1, a, b, c)=0$ for all other choices of $a, b, c$. 
\item
Next, consider $S(j, \ell, w_1, w_2)$ for $j>1$. 
Then, $S(j, \ell, w_1, w_2)=1$ if one of the following holds: $S(j-1,\ell, w_1, w_2)=1$,
or $S(j-1, \ell-1, w_1-|C_1(j)|, w_2)=1$ or $S(j-1, \ell-1, w_1, w_2-|C_2(j)|)=1$.
Otherwise, $S(j, \ell, w_1, w_2)=0$.
\end{itemize}
\end{itemize}

\begin{theorem}
\label{theorem:dp}
If there exists a feasible solution to an instance of \prob{}, then algorithm \algodp{} computes $S(m, B, M_1, M_2)=1$, and,  there exists a solution $(X'_1, X'_2)$ which is a $(\frac{1}{\gamma}, 1)$-approximation.
\end{theorem}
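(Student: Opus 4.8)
The plan is to split the argument into three parts: (i) verifying that the recurrence fills the table $S(\cdot)$ correctly, (ii) showing that feasibility forces $S(m,B,M_1,M_2)=1$, and (iii) turning the solution read off from that entry into a coverage guarantee. The whole proof hinges on a single elementary observation relating the additive weight $w_\ell$ that the table tracks to the actual coverage $|V_\ell|$ that we care about, so I would state and prove that first: for $\ell\in\{1,2\}$ and any disjoint $X_1,X_2$, I claim $|V_\ell(X_1,X_2)|\le w_\ell(X_\ell)\le \gamma\,|V_\ell(X_1,X_2)|$. Writing $w_\ell(X_\ell)=\sum_{s_i\in C_\ell}|t_i\cap X_\ell|$, the left inequality holds because each covered object contributes at least $1$ to the sum, and the right inequality holds because each object in $C_\ell$ owns at most $\gamma=\max_i|t_i|$ tags, so it is counted at most $\gamma$ times. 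This is the only place $\gamma$ enters, and it is exactly what produces the $1/\gamma$ loss.

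For part (i), I would prove by induction on $j$ that $S(j,\ell,w_1,w_2)=1$ if and only if there exist disjoint $X_1,X_2\subseteq\{1,\ldots,j\}$ with $|X_1|+|X_2|\le\ell$ and weights at least $w_1$ and $w_2$, where the stored weights are read as capped at $M_1$ and $M_2$ (the table is indexed only up to those values). The inductive step is immediate from the three branches of the recurrence: tag $j$ is either left out, placed in $X_1$ (charging $|E_1(j)|$ to the first weight and $1$ to the budget), or placed in $X_2$; disjointness is automatic since the branches are mutually exclusive, so $j$ never lands in both sets. The base case is the given initialization at $j=1$.

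For parts (ii) and (iii), I would take an optimal feasible solution $(X_1^\ast,X_2^\ast)$ of cost $B$. Feasibility gives $|V_\ell|\ge M_\ell$, and the left half of the sandwich then yields $w_\ell(X_\ell^\ast)\ge M_\ell$; capped at $M_\ell$ these weights register exactly at the index $M_\ell$, so $S(m,B,M_1,M_2)=1$ by part (i). Back-tracking from this entry recovers disjoint $X_1',X_2'$ with $|X_1'|+|X_2'|\le B$ and $w_\ell(X_\ell')\ge M_\ell$; the right half of the sandwich then gives $|V_\ell(X_1',X_2')|\ge w_\ell(X_\ell')/\gamma\ge M_\ell/\gamma$ for each $\ell$. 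Since the cost is at most $B$ (the optimal cost) and each coverage requirement is met up to the factor $1/\gamma$, this is precisely a $(\tfrac{1}{\gamma},1)$-approximation.

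The main obstacle is the mismatch between what the dynamic program can afford to remember and what the objective measures: tracking the true covered set $V_\ell$ would need exponentially many states, so the algorithm tracks only the additive weight $w_\ell(X_\ell)$, which over-counts objects carrying several chosen tags. Controlling this over-counting is the crux, and the factor $\gamma$ is the only clean bound available, which is why the budget is matched exactly while the coverage degrades to $M_\ell/\gamma$. A secondary point I would be careful to spell out is the implicit capping of $w_1,w_2$ at $M_1,M_2$ in the recurrence; this is what allows an optimal solution whose weight strictly exceeds $M_\ell$ to still be detected at the table index $M_\ell$, and it should be made explicit either in the recurrence or in the correctness claim of part (i).
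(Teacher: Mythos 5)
Your proposal is correct and follows essentially the same route as the paper's proof: the sandwich inequality $|V_\ell(X_1,X_2)|\le w_\ell(X_\ell)\le \gamma\,|V_\ell(X_1,X_2)|$ is exactly the relation the paper uses (its left half to show feasibility forces $S(m,B,M_1,M_2)=1$, its right half to convert $w_\ell(X'_\ell)\ge M_\ell$ into $|V_\ell|\ge M_\ell/\gamma$ after back-tracking). Your explicit remark that the weights must be read as capped at $M_1,M_2$ is a worthwhile clarification of a point the paper leaves implicit, but it does not change the argument.
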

\begin{proof}
First, we observe that the above dynamic programming algorithm  correctly computes 
$S(j, \ell, w_1,  w_2)$ for all inputs by induction.
The base case is $j=1$.

Suppose there is a feasible solution $(X_1, X_2)$ such that
$|X_1|+|X_2|\leq  B$, $|V_1(X_1, X_2)|\geq M_1$, $|V_2(X_1, X_2)|\geq M_2$.
Then,
\begin{eqnarray*}
|V_1(X_1, X_2)| &=& \sum_{s_i\in C_1}\mathbf{1}\{t_i\cap X_1\neq\emptyset\} \\
&\leq& \sum_{s_i\in C_1}|t_i \cap X_1| = \sum_{j\in X_1}|E_1(j)|.
\end{eqnarray*}
This implies $w_1(X_1)=\sum_{j\in X_1}|E_1(j)|\geq M_1$ if $|V_1(X_1, X_2)|\geq M_1$.
Similarly, $w_2(X_2)=\sum_{j\in X_2}|E_2(j)|\geq M_2$ if $|V_2(X_1, X_2)|\geq M_2$.
Therefore, if there is a solution $(X_1, X_2)$ with $|X_1|+|X_2|\leq B$ and
$|V_1(X_1, X_2)|\geq M_1$, $|V_2(X_1, X_2)|\geq M_2$, then 
$S(m, B, M_1, M_2)=1$, and the dynamic program correctly determines this.

Suppose $S(m, B, M_1, M_2)=1$. Then, there exists a solution $(X_1, X_2)$ (which can be
computed by tracing the steps of the dynamic program),  such that
$|X_1|+|X_2|\leq B$ and $w_1(X_1)\geq M_1$, $w_2(X_2)\geq M_2$.
From the above analysis,
\[
\sum_{j\in X_1}|E_1(j)| = \sum_{i\in C_1}|t_i\cap X_1| = \sum_{i\in C_1} |t_i|\mathbf{1}\{t_i \cap X_1\neq\emptyset\}
\]

Therefore, if $\sum_{j\in X_1}|E_1(j)|\geq M_1$, we have
\[
\sum_{i\in C_1} \gamma\mathbf{1}\{t_i\cap X_1\neq\emptyset\} \geq
\sum_{i\in C_1} |t_i|\mathbf{1}\{t_i\cap X_1\neq\emptyset\} \geq M_1,
\]
which implies
\[
|V_1(X_1, X_2)| = \sum_{i\in C_1}\mathbf{1}\{t_i\cap X_1\neq\emptyset\} \geq  \frac{1}{\gamma}M_1. 
\]
Similarly, we have $|V_2(X_1, X_2)|\geq \frac{1}{\gamma}M_2$.
Therefore, $(X_1, X_2)$  is an $(\frac{1}{\gamma}, 1)$-approximate solution.
\end{proof}

\section{Approximation using submodularity}
\label{sec:submod}
The \prob{} problem can be viewed as a problem of submodular function maximization with constraints which can be expressed as a matroid.
For convenience, we assume that a cost budget $B$ is also specified as part of an instance of \prob{} and that the goal of the $(\alpha,\delta)$--approximation algorithm is to produce a solution
that covers  at least $\alpha M_{\ell}$ objects in each cluster $C_{\ell}$ and has a cost of at most $\delta B$.
This assumption can be made without loss of generality 
since the optimal cost is an integer
in  $[1 ~..~|T|]$; one can do a binary search over this
range by executing the algorithm with $O(\log{|T|})$ different budget values and using the smallest budget for which the algorithm produces a 
solution.
We first discuss the necessary concepts, and then describe our algorithm. We refer the reader to \cite{Calinescu:2011:MMS:2340436.2340447} 
for more details regarding submodular function maximization subject to matroid constraints.

A \underline{matroid}  is a pair $\mathcal{M}=(Y,\mathcal{I})$, where $\mathcal{I}\subseteq 2^Y$ and
(1) $\forall A'\in\mathcal{I}$, $A\subset A'\Rightarrow A\in\mathcal{I}$, and
(2) $\forall A, A'\in\mathcal{I}$, $|A|<|A'|\Rightarrow \exists x\in A'-A$ such that $A \cup \{x\} \in\mathcal{I}$.
A function $f:2^Y\rightarrow\mathbb{R}_{\geq 0}$ is 
\underline{submodular} if
$f(A \cup \{x\})-f(A)\geq f(A' \cup \{x\})-f(A')$ for all $A\subseteq A'$.
Function $f(\cdot)$ is \underline{monotone} if $f(A)\leq f(A')$ for all $A\subseteq A'$.

\smallskip
\noindent
\textbf{Constructing a matroid for \prob{}.}
For each tag $j\in T$, let $Y_j=\{a_j, b_j\}$. Let $Y=\cup_j Y_j$.
Let $\mathcal{I}=\{A\subset Y: |A\cap Y_j|\leq 1,\ \forall j 
\mbox{ and }|A|\leq B\}$.
Then $\mathcal{M}=(Y, \mathcal{I})$ can be seen as an intersection of a partition matroid,
which requires $|A\cap Y_j|\leq 1$ for all $j$, and a uniform matroid, which requires $|A|\leq B$.

\begin{lemma}
\label{lemma:matroid}
$\mathcal{M}=(Y, \mathcal{I})$ is a matroid.
\end{lemma}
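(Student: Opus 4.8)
The plan is to verify directly that $\mathcal{M} = (Y, \mathcal{I})$ satisfies the two defining axioms of a matroid, rather than appealing to any closure property of matroid intersection. This is important because the intersection of two matroids need not be a matroid in general; what makes the present construction go through is that the uniform constraint $|A| \leq B$ and the partition constraint are counting the same quantity, since any set in $\mathcal{I}$ contains at most one element per block $Y_j$.

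For the hereditary axiom, suppose $A' \in \mathcal{I}$ and $A \subset A'$. Here I would simply observe that both defining inequalities are monotone under passing to a subset: $|A \cap Y_j| \leq |A' \cap Y_j| \leq 1$ for every $j$, and $|A| \leq |A'| \leq B$. Hence $A \in \mathcal{I}$, and this direction is immediate.

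The real content is the exchange axiom. Here I would take $A, A' \in \mathcal{I}$ with $|A| < |A'|$ and seek an element $x \in A' \setminus A$ with $A \cup \{x\} \in \mathcal{I}$. The key step is to record, for each independent set, which blocks it occupies: since $|A \cap Y_j| \leq 1$ for all $j$, the size $|A|$ equals exactly the number of blocks $Y_j$ that meet $A$. Writing $J_A$ and $J_{A'}$ for the index sets of blocks met by $A$ and $A'$, we get $|J_A| = |A| < |A'| = |J_{A'}|$, so there is some block index $j^* \in J_{A'} \setminus J_A$; let $x$ be the unique element of $A'$ lying in $Y_{j^*}$. Since $A$ meets no element of $Y_{j^*}$ we have $x \notin A$, so $x \in A' \setminus A$. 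Adding $x$ keeps every block with at most one element (block $j^*$ now holds exactly one, all others unchanged), and the size bound is automatic because $|A \cup \{x\}| = |A| + 1 \leq |A'| \leq B$. Thus $A \cup \{x\} \in \mathcal{I}$, which establishes the exchange property.

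I expect the exchange axiom to be the only place requiring care, precisely because one must satisfy both constraints simultaneously. The resolution is the counting observation above: a strictly larger independent set necessarily occupies strictly more blocks, which both supplies a candidate $x$ avoiding any violation of the partition constraint and guarantees, via $|A| < |A'| \leq B$, that the uniform bound $|A| + 1 \leq B$ holds for free. No delicate case analysis is needed once this alignment between the two constraints is noticed.
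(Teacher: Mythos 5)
Your proof is correct and complete. The paper itself does not spell out a proof of this lemma in the main text: it only remarks that $\mathcal{M}$ ``can be seen as an intersection of a partition matroid and a uniform matroid'' (deferring details to the full version). As you rightly point out, that remark is not by itself a proof, since the intersection of two matroids on a common ground set need not be a matroid; the construction works here only because the two constraints are aligned. Your direct verification of the hereditary and exchange axioms supplies exactly the missing content, and the key counting observation --- that $|A|$ equals the number of blocks $Y_j$ occupied by $A$, so a strictly larger independent set must occupy a block that $A$ misses, and the budget bound $|A|+1 \leq |A'| \leq B$ comes for free --- is the right way to see it. An equivalent, slightly slicker packaging of the same idea is to note that $\mathcal{I}$ is the truncation at rank $B$ of the partition matroid $\{A : |A \cap Y_j| \leq 1\ \forall j\}$, and truncation of a matroid is always a matroid; your exchange argument is essentially the proof of that standard fact specialized to this instance. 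Either route is fine, and yours has the advantage of being self-contained.
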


\noindent
\textbf{Constructing a submodular function.}
It is easy to verify that the function $|V_{\ell}(X)|/M_{\ell}$ (which is the fraction of objects covered by solution $X$ in cluster $C_{\ell}$) is a submodular function of $X$. When $k=2$, we need to find a solution $X$ such that $|V_1(X)|/M_1\geq 1$ and $|V_2(X)|/M_2\geq 1$ hold \emph{simultaneously}. This can be achieved by requiring $\min\Big\{\frac{|V_1(X^A)|}{M_1}, \frac{|V_2(X^A)|}{M_2}\Big\} \geq 1$. However, the minimum of two submodular functions is not submodular, in general. We handle this by using the ``saturation'' technique of \cite{Krause2008RobustSO}:
for $A\subseteq Y$, define $X^A_1=\{j\in T: a_j\in A\}$ and $X^A_2=\{j\in T: b_j\in A\}$,
and let $X^A=(X^A_1, X^A_2)$.
Define $F_1(A)=\min\Big\{\frac{|V_1(X^A)|}{M_1}, 1\Big\}$, $F_2(A)=\min\{\frac{|V_2(X^A)|}{M_2}, 1\}$ and
$F(A) = F_1(A) + F_2(A)$. It is easy to verify the
following lemma.

\begin{lemma}
\label{lemma:submod}
$F_1(A)$, $F_2(A)$, and $F(A)$ are monotone submodular functions of $A$.
\end{lemma}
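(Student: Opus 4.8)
The plan is to establish the three claims by building up from the underlying coverage functions and then applying two closure properties (truncation and addition). First I would observe that each quantity $|V_\ell(X^A)|$ is a genuine set-coverage function of $A$. Writing $E_\ell(j) = \{s_i \in C_\ell : j \in t_i\}$ (as in the dynamic programming section), one has $V_1(X^A) = \bigcup_{j:\, a_j \in A} E_1(j)$ and $V_2(X^A) = \bigcup_{j:\, b_j \in A} E_2(j)$, so $|V_1(X^A)|$ depends only on the ``$a$-coordinates'' of $A$ and $|V_2(X^A)|$ only on the ``$b$-coordinates.'' Each is the cardinality of a union of fixed sets indexed by the chosen ground-set elements, i.e., a coverage function, which is a standard example of a monotone submodular function. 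Since a function that depends only on the restriction of $A$ to a fixed subset of the ground set $Y$ inherits monotonicity and submodularity from that restriction, both $|V_1(X^A)|$ and $|V_2(X^A)|$ are monotone submodular on $2^Y$, even though $Y$ also contains the irrelevant coordinates.

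Next I would handle the scaling and the truncation. Dividing by the positive constant $M_\ell$ preserves both properties, so $|V_\ell(X^A)|/M_\ell$ is monotone submodular. The step that needs the real argument is the outer $\min\{\cdot, 1\}$. Here I would invoke the general fact that if $f$ is monotone submodular and $\phi:\mathbb{R}\to\mathbb{R}$ is nondecreasing and concave, then $\phi \circ f$ is again monotone submodular; applying it with $\phi(t)=\min\{t,1\}$, which is nondecreasing and concave, shows that $F_1$ and $F_2$ are monotone submodular.

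To prove the composition fact I would argue directly on marginal gains. Monotonicity of $\phi\circ f$ is immediate since both maps are nondecreasing. For submodularity, fix $A\subseteq A'$ and $x\notin A'$, and set $a=f(A)$, $a'=f(A')$, $b=f(A\cup\{x\})-f(A)$, and $b'=f(A'\cup\{x\})-f(A')$. Monotonicity and submodularity of $f$ give $a\le a'$ and $b\ge b'\ge 0$. Using that $\phi$ is concave (so the increment $\phi(t+s)-\phi(t)$ is nonincreasing in the base point $t$ for any fixed $s\ge 0$) together with $\phi$ being nondecreasing, I obtain $\phi(a+b)-\phi(a)\ge \phi(a'+b)-\phi(a')\ge \phi(a'+b')-\phi(a')$, which is exactly the diminishing-returns inequality for $\phi\circ f$. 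This truncation/saturation argument is the heart of the technique borrowed from \cite{Krause2008RobustSO} and is the main obstacle; everything else is bookkeeping.

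Finally, I would conclude for $F$ using the closure of monotone submodular functions under addition: since $F_1$ and $F_2$ are each monotone submodular, so is $F=F_1+F_2$, completing the lemma.
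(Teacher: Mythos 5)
Your proof is correct and follows exactly the standard argument that the paper leaves implicit (the paper offers no proof, remarking only that the lemma is easy to verify): $|V_\ell(X^A)|$ is a coverage function of $A$ and hence monotone submodular, composition with the nondecreasing concave truncation $\phi(t)=\min\{t,1\}$ preserves both properties, and the sum $F=F_1+F_2$ inherits them. Your justification of the composition step correctly uses the monotonicity of $f$ to ensure the marginals $b,b'$ are nonnegative, which is where that closure fact can otherwise fail, so there is no gap.
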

\noindent
Our algorithm  for \prob{} with $k = 2$ involves the following steps.

\noindent
1. Use the algorithm of \cite{Calinescu:2011:MMS:2340436.2340447} to find a set $A\in\mathcal{I}$ which
maximizes $F(A) = F_1(A) + F_2(A)$.

\noindent
2. Return the solution $X^A=(X^A_1, X^A_2)$.

\smallskip

\begin{theorem}\label{thm:half-one}
Suppose there is a feasible solution to an instance $(S, \pi, T, B, M_1, M_2)$ of \prob{},
with $k=2$. Then, the above algorithm runs in polynomial time and returns an $(1-2/e, 1)$-approximate solution $X^A$.
\end{theorem}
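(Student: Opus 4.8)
The plan is to reduce everything to the $(1-1/e)$-approximation guarantee of the continuous-greedy algorithm of \cite{Calinescu:2011:MMS:2340436.2340447} for maximizing a monotone submodular function over a matroid, and then to exploit the crucial fact that each \emph{capped} term $F_{\ell}$ is bounded above by $1$ in order to convert a guarantee on the sum $F = F_1 + F_2$ into a per-cluster coverage guarantee. The ingredients are already in place: $\mathcal{M}$ is a matroid by Lemma \ref{lemma:matroid}, and $F = F_1 + F_2$ is monotone submodular by Lemma \ref{lemma:submod}.

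First I would dispose of the routine parts. Polynomial running time follows because the algorithm of \cite{Calinescu:2011:MMS:2340436.2340447} runs in polynomial time given a polynomial-time independence oracle for $\mathcal{M}$ (which is immediate from the description of $\mathcal{I}$) and a value oracle for $F$ (which is evaluated by directly counting covered objects). For any $A\in\mathcal{I}$ returned by the algorithm, the partition-matroid constraint $|A\cap Y_j|\leq 1$ guarantees that $a_j$ and $b_j$ cannot both lie in $A$, so $X^A_1$ and $X^A_2$ are disjoint; moreover every element of $A$ contributes to exactly one of $X^A_1, X^A_2$, whence $|X^A_1|+|X^A_2| = |A| \leq B$. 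This already yields the cost bound $\sum_{\ell}|X^A_{\ell}|\leq B$ with $\delta = 1$.

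Next I would lower-bound the optimum of the submodular maximization. Given a feasible solution $(X_1^*, X_2^*)$ to the \prob{} instance, I would form the corresponding independent set $A^* = \{a_j : j\in X_1^*\}\cup\{b_j : j\in X_2^*\}$. Disjointness of $X_1^*, X_2^*$ forces $|A^*\cap Y_j|\leq 1$, and $|A^*| = |X_1^*|+|X_2^*|\leq B$, so $A^*\in\mathcal{I}$. Feasibility gives $|V_1(X^{A^*})|\geq M_1$ and $|V_2(X^{A^*})|\geq M_2$, so $F_1(A^*)=F_2(A^*)=1$ and $F(A^*)=2$. Since $F(A)\leq 2$ for every $A$, the maximum of $F$ over $\mathcal{I}$ equals $2$. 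Thus the algorithm returns some $A$ with $F(A)\geq(1-1/e)\cdot 2 = 2-2/e$. Using the saturation structure, each term obeys $F_{\ell}(A)\leq 1$, so $F_1(A)\geq(2-2/e)-F_2(A)\geq 1-2/e$ and symmetrically $F_2(A)\geq 1-2/e$. Because $F_{\ell}(A)=\min\{|V_{\ell}(X^A)|/M_{\ell}, 1\}$ and $1-2/e<1$, the inequality $F_{\ell}(A)\geq 1-2/e$ forces $|V_{\ell}(X^A)|\geq(1-2/e)M_{\ell}$ for $\ell=1,2$, establishing the coverage requirement with $\alpha = 1-2/e$ and completing the argument.

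The one genuinely delicate point is the last deduction, and it is exactly where the capping at $1$ does the work: summing two \emph{uncapped} coverage functions would still be submodular, but a bound on the sum would permit one cluster to be massively over-covered while the other is starved, so no per-cluster guarantee would follow. The saturation idea of \cite{Krause2008RobustSO} is what makes the transfer from $F = F_1 + F_2$ to each $F_{\ell}$ valid. It is also precisely why the approach is restricted to $k=2$: for general $k$ the same reasoning would only give $F_{\ell}(A)\geq(1-1/e)k-(k-1)=1-k/e$, which is already vacuous for $k\geq 3$, so the bound degrades to something nontrivial only when $k=2$.
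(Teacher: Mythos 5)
Your proof is correct and follows exactly the route the paper sets up: the paper defers the formal proof to the extended version, but its scaffolding (Lemma \ref{lemma:matroid}, Lemma \ref{lemma:submod}, and the saturation construction of $F=F_1+F_2$) is designed precisely for the argument you give, namely that a feasible solution yields $\max_{A\in\mathcal{I}}F(A)=2$, the algorithm of \cite{Calinescu:2011:MMS:2340436.2340447} returns $A$ with $F(A)\geq 2-2/e$, and the cap $F_{\ell}\leq 1$ transfers this to $F_{\ell}(A)\geq 1-2/e$ per cluster, with disjointness and the budget coming from the matroid constraints. Your closing remark on why the argument degrades for $k\geq 3$ also matches the paper's stated reason for restricting to $k=2$.
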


In \cite{sambaturu2019}, we show how the above approach can be
extended to approximate the objective of maximizing the 
total coverage (i.e., $\sum_{\ell}|V_{\ell}(X)|$), for any $k$.

\section{Experimental results}
\label{sec:results}

Our experiments focus on  the following questions.\\
\textbf{1. Benefit of allowing cluster specific coverage.} How do the results from our \prob{} formulation compare with those of \cite{DGR-NIPS-2018}?\\
\textbf{2. Dependence of the cost on coverage.} Does the cost increase gradually as the coverage requirement increases?\\
\textbf{3. Descriptions with pairs of tags}. Does adding pair of tags to the tagset provide better explanations of clusters? How do these results compare to those where pairs of tags are not used?\\
\textbf{4. Performance}. Does \algo{} give solutions with good approximation guarantees in practice, and does it scale to large real world datasets?\\
\textbf{5. Explanation of clusters}. Do the solutions provide interpretable explanations of clusters in real world datasets?

\subsection{Datasets and methods}

\noindent
\textbf{Datasets.}
Table \ref{tab:datasets} provides details of the real and synthetic datasets used in our experiments. In the synthetic datasets, an object is associated with a tag with probability $p$. 

\begin{table}[!ht]
    \centering
    \begin{tabular}{c c c c c} \hline
        \textbf{Real/Synthetic} & $|S|$ & $|T|$ & $|C_1|$ & $|C_2|$\\ \hline  
         Genome (Threat) & 248 & 4632 & 73  & 175 \\
         Uniref90 & 21537 & 2193 & 13406 & 8131 \\
         Flickr & 2455 & 175 & 1052 & 1402 \\
         Philosophers & 249 & 14549 & 110 & 139  \\
         Synthetic-1 &  100 & 100 & 48 & 52\\ 
         Synthetic-2 &  1000 & 1000 & 502    & 498\\
         Synthetic-3 &   1000 & 1000 & 478   & 522 \\ \hline
    \end{tabular}
    \caption{Description of datasets. The three synthetic datasets above were
     generated using probability values 0.05, 0.2 and 0.05 respectively.}
    \label{tab:datasets}
\end{table}
The Threat and Uniref90 datasets \cite{Jain-etal-2018b,RW-2018} contain genome sequences and information that may indicate a given gene's threat potential, which is established manually by domain experts--- this is used to partition the sequences into four clusters (referred to as threat bins 1--4). The tags associated with these sequences are various characteristics of the genes in them, obtained from Bioinformatics repositories.
Uniref90 is an expanded version of the Threat dataset, with additional attributes computed using sequence similarity.

The Flickr dataset \cite{YangML13} consists of images as nodes and relationships between images as edges. A relationship could correspond to images being submitted from the same location, belonging to the same group, or sharing common tags, etc. We use the Louvain algorithm in Networkx \cite{HagbergSS08} to generate communities of images, and pick the communities as clusters. User defined tags, such as ``dog'', ``person'', ``car'', etc., are provided for each image. The Philosophers dataset \cite{YangML13} consists of Wikipedia articles (considered to be the objects to be clustered) on various philosophers. The tags corresponding to each object are the non-philosopher Wikipedia articles to which there is an outlink from the philosopher article. The clusters in the philosopher data are generated by grouping communities that share a common keyword as a single cluster.  The Synthetic-2 and Synthetic-3 datasets are generated with four clusters. In some experiments, we merge the clusters in these datasets into two clusters, one corresponding to clusters 1 and 2, and the other corresponding to clusters 3 and 4, as shown in Table \ref{tab:datasets}. In many of our experiments, we consider $k=2$, and fixed $M_1$ and $M_2$ close to $70\%$ of that of $|C_1|$ and $|C_2|$, respectively. The Twitter dataset used in \cite{DGR-NIPS-2018} was unavailable due to the terms of the dataset, and we are unable to compare with the results of \cite{DGR-NIPS-2018}.

\noindent
\textbf{Methods.}
We study the performance of \algo{} in our experiments. We run the rounding steps 4-14 in Algorithm \ref{alg:round} for 10 iterations.
We use the ILP as a baseline. Note that for the complete coverage version (i.e., $M_{\ell}=|C_{\ell}|$), the ILP is exactly the method used by Davidson et al. (\cite{DGR-NIPS-2018}). 

\noindent
\textbf{Code.} The code is available at \url{https://github.com/prathyush6/ExplainabilityCodeAAAI20.git}.

\subsection{Results}
\textbf{1. Benefit of allowing cluster specific coverage.}
The exact coverage formulation of \cite{DGR-NIPS-2018} (which corresponds to $M_{\ell}=|C_{\ell}|$ for all $\ell$) is infeasible for some of the datasets we consider. Instead, we examine the cluster descriptions computed using the cover-or-forget formulation of \cite{DGR-NIPS-2018}, which maximizes the total number of objects covered. 
Figures \ref{fig:flickr_boc} and \ref{fig:uniref_boc} show the coverage percent for each cluster, i.e., $(|V_{\ell}(X)|/M_{\ell})\times 100\%$, ($y$-axis) versus the cost of the solution ($x$-axis), for the Flickr and Uniref90 datasets, respectively. Both figures show that the coverage is highly imbalanced. For instance, with $3$ tags, almost 90\% of elements in cluster C2 are covered, whereas only 57\% of elements in C1 are covered in Figure \ref{fig:flickr_boc}. This is a limitation of the cover-or-forget approach, and the cluster specific coverage requirements in \prob{} can help alleviate this problem.

\begin{figure}
\centering
\subfloat[Flickr]{%
\label{fig:flickr_boc}%
\includegraphics[scale= 0.5]{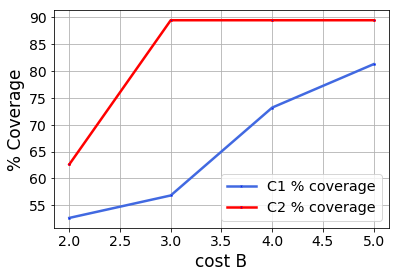}}%
\qquad
\subfloat[Uniref90]{%
\label{fig:uniref_boc}%
\includegraphics[scale =0.5]{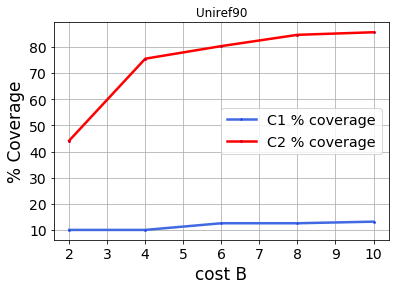}}%
\caption{Coverage percent in each cluster ($y$-axis) and the solution cost ($x$-axis) for the Flickr and Uniref datasets. 
}
\end{figure}

\noindent
\textbf{2. Dependence of cost on the coverage requirement.}
Figures \ref{fig:2clus} and \ref{fig:4clus} 
show the cost of the solution vs the coverage fraction. Initially, the cost grows slowly, but after a point, the cost increases rapidly. For some parameter settings, there is no feasible solution, which corresponds to the ends of the curves. As the number of clusters increases, the cost to cover a given fraction of elements increases.

\begin{figure}
\centering
\subfloat[$k=2$ clusters]{%
\label{fig:2clus}
\includegraphics[scale = 0.5]{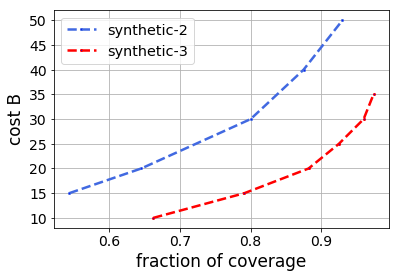}}%
\qquad
\subfloat[$k=4$ clusters]{%
\label{fig:4clus}%
\includegraphics[scale = 0.5]{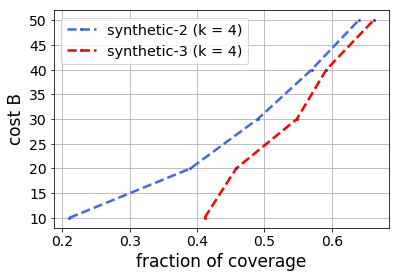}}
\caption{Overall fraction of coverage ($x$-axis) vs the cost $B$ ($y$-axis). The minimum coverage requirement in each cluster is set to at least 50\%.
}
\end{figure}

\noindent
\textbf{3. Descriptions with pairs of tags.}
We extend the set of tags $T$ to $T_{ext}$ by adding every pair $(j, j')$, where $j, j'\in T$, and use $T_{ext}$ for finding descriptions. For some datasets, this increases the feasible regime, but when the instance is feasible, the solutions using $T$ and $T_{ext}$ are pretty close. However, even if the description cost is very similar, using $T_{ext}$ sometimes provides more meaningful descriptions. For instance, on Philosophers dataset, we found pairs such as (`Benedict\_XIV', `Roman\_Catholic\_religious\_order') picked to describe the cluster corresponding to Wikipedia articles related to Christianity.

\noindent
\textbf{4. Performance.} 
First, we consider the approximation guarantee of \algo{} in practice.
Figure \ref{fig:approx} shows the approximation ratios (i.e., the ratio of the coverage achieved by \algo{}, to that of an optimum solution) on the $y$-axis, and the solution cost on the $x$-axis. Recall that the analysis in Theorem \ref{theorem:round1} only guarantees a coverage factor of $1/8 = 0.125$, but the plot for $k=2$ shows that the approximation factors in practice are much higher---they are always $\geq 0.8$, and $>0.9$ in most cases. Note that the curves are non-monotone---this is due to the stochastic nature of \algo{}. However, for $k=4$, the approximation ratios are lower as shown in Figure \ref{fig:kapprox}.

\begin{figure}%
\centering
\subfloat[Datasets with 2 clusters]{%
\label{fig:approx}%
\includegraphics[scale = 0.5]{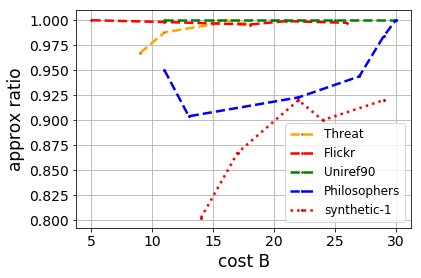}}%
\qquad
\subfloat[Datasets with 4 clusters]{%
\label{fig:kapprox}%
\includegraphics[scale = 0.5]{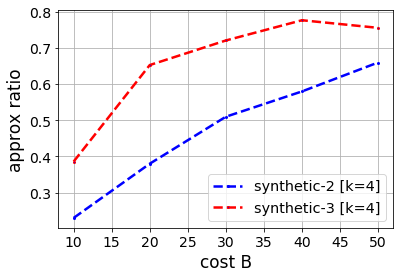}}%
\caption{Approximation ratio of \algo{}  ($y$-axis) vs  budget ($x$-axis) for different datasets (higher is better).
}
\end{figure}

We also observe that \algo{} is quite scalable.
The running time is dominated by the time needed to solve the LP. We use the Gurobi solver, which is able to run successfully on datasets whose data matrix (i.e., the matrix of objects and tags) has up to $10^8$ entries. In contrast, the ILP does not scale beyond datasets with more than $10^6$ entries.

\noindent
\textbf{5. Explanation of clusters}

\noindent
\textbf{(a) Genomic threat sequences (harmful v harmless).}
Our method chose 13 tags for the harmful cluster.
Upon expert review of our results, we found that certain tags served as indicators that genes found within the harmful cluster can intrinsically be viewed as harmful, while others may need to act in concert, be viewed in combination with other tags, or be representative of selection bias. Of the 13 tags selected, 4 indicate intrinsic capability of being harmful: KW-0800 (toxin), 155864.Z3344 (Shiga~toxin~1), 
IPR011050 (Pectin lyase fold/virulence), and 
IPR015217 (invasin domain). Another 4 tags are suggestive that the genes implicated are involved in processes or locations commonly associated with threat: KW-0732 (signal peptide), KW-0614 (plasmid), KW-0964 (secreted), and GO:0050896 (response~to~stimulus). Other tags associated with the threat partition such as KW-0002 (3-D structure) indicate a limited amount of data and perhaps bias in the research literature for the clusters analyzed. Table \ref{tab:genomedata} provides the details.

To define the clusters, each gene was used as a seed to obtain constituent members of Uniref90 groups. By including genes greater than or equal to 90\% sequence identity to the manually curated set, clusters were created and the number of sequences with associated attributes increased to 63,305.

\begin{table*}[!ht]
    \centering
    \begin{tabular}{p{2.4cm} p{2.8cm} p{11cm}}
        \hline
        \textbf{String} & \textbf{Keyword} & \textbf{Definition} \\ \hline 
        \red{KW-0800} & \red{Toxin}    & {Naturally-produced poisonous protein that damages or kills other cells, or the producing cells themselves in some cases in bacteria. Toxins are produced by venomous and poisonous animals, some plants, some fungi, and some pathogenic bacteria. Animal toxins (mostly from snakes, scorpions, spiders, sea anemones and cone snails) are generally secreted in the venom of the animal.}\\ \hline
        \blue{GO:0050896} & \blue{response to \newline stimulus} &    
         {Any process that results in a change in state or activity of a cell or an organism (in terms of movement, secretion, enzyme production, gene expression, etc.) as a result of a stimulus. The process begins with detection of the stimulus and ends with a change in state or activity or the cell or organism.} \\ \hline
        \blue{KW-0964} & \blue{secreted} & {Protein secreted into the cell surroundings.} \\ \hline
        \black{GO:0050794} & \black{regulation \newline of cellular \newline  process} & {Any process that modulates the frequency, rate or extent of a cellular process, any of those that are carried out at the cellular level, but are not necessarily restricted to a single cell. For example, cell communication occurs among more than one cell but occurs at the cellular level.} \\ \hline
        \black{GO:0016787} & \black{hydrolase \newline activity}    & {Catalysis of the hydrolysis of various bonds, e.g. C-O, C-N, C-C, phosphoric anhydride bonds, etc. Hydrolase is the systematic name for any enzyme of EC class 3.}\\ \hline
        \red{IPR011050} &    \red{Pectin\_lyase-fold/\newline virulence}    & {Microbial pectin and pectate lyases are virulence factors that degrade the pectic components of the plant cell wall.}\\ \hline
        \red{IPR015217} & \red{Invasin\_dom\_3} & {It forms part of the extracellular region of the protein, which can be expressed as a soluble protein (Inv497) that binds integrins and promotes subsequent uptake by cells when attached to bacteria.}
        \\ \hline
        \blue{KW-0732}    & \blue{Signal} & {Protein which has a signal sequence, a peptide usually present at the N-terminus of proteins and which is destined to be either secreted or part of membrane components.} \\ \hline
        \black{KW-0002} & \black{3-D  structure\newline (KW-0002)} &    {P+A1:C14 or part of a protein, whose three-dimensional structure has been resolved experimentally (for example by X-ray crystallography or NMR spectroscopy) and whose coordinates are available in the PDB database.}\\ \hline
    \end{tabular}
 \caption{Tags selected by our algorithm for the harmful cluster in the Threat dataset. \red{Red} is intrinsic threat. \blue{Blue} is suggestive. Black is due to a lack of sufficient background.}
    \label{tab:genomedata}
\end{table*}

\smallskip

\noindent
\textbf{(b) Philosophers dataset:}~ Here, Cluster~1 is the set of Wikipedia pages related to India and Greece, whereas Cluster~2 has pages
related to Christianity. 
The tags picked by our algorithm to explain Cluster~1 are
\textit{Buddhist terms and concepts, Metaphysics, Sanyasa, Buddhism, Athenian, Mathematician, Greek Language}, which are consistent with the pages in
the cluster.
The tags picked to explain Cluster~2 are \textit{Constantinople, Existentialism, Abortion, Political Philosophy, Theology, England}, which are consistent
with the contents of that cluster.
\noindent

\section{Conclusions}
\label{sec:conc}

We formulated a version of the cluster description problem that allows simultaneous coverage requirements for all the clusters. We presented rigorous approximation algorithms for the problem using techniques from randomized rounding of linear programs and submodular optimization.  Our rounding-based algorithm exhibits very good performance in practice. Using a real world data set containing genomic threat sequences, we observed that the descriptors found using the algorithm give useful insights. In our experiments, we considered several different parameters including coverage level, budget, and overlap. Using these parameters, our approach can be used to obtain a range of solutions from which a practitioner can choose appropriate descriptors.

\bibliographystyle{plain}
\bibliography{references}
\end{document}